\newtheorem{theorem}{Theorem}%[section]
\newtheorem{lemma}[theorem]{Lemma}%[section]
\newtheorem{corollary}[theorem]{Corollary}%[section]
\newdefinition{definition}[theorem]{Definition}%[section]
\newdefinition{example}[theorem]{Example}%[section]
\newtheorem{remark}[theorem]{Remark}%[section]
\newproof{Proof}{Proof}%[section]
\newenvironment{proof}{\begin{Proof}}{\qed\end{Proof}}
\newcommand{\sdif}{\mathop{\mathrm{\Delta}}}
\renewcommand{\emptyset}{\varnothing}
\begin{document}

\begin{frontmatter}

\title{Sorting by Reversals and the Theory of 4-Regular Graphs}

\author{Robert Brijder\fnref{myfootnote}}
\address{Hasselt University, Belgium}
\fntext[myfootnote]{Postdoctoral fellow of the Research Foundation -- Flanders (FWO).}
\ead{robert.brijder@uhasselt.be}

\begin{abstract}
We show that the theory of sorting by reversals fits into the well-established theory of circuit partitions of 4-regular multigraphs (which also involves the combinatorial structures of circle graphs and delta-matroids). In this way, we expose strong connections between the two theories that have not been fully appreciated before. 
%Indeed, a notion such as local complementation has been rediscovered in the context of sorting by reversals. 
We also discuss a generalization of sorting by reversals involving the double-cut-and-join (DCJ) operation. Finally, we also show that the theory of sorting by reversals is closely related to that of gene assembly in ciliates.
\end{abstract}

\begin{keyword}
sorting by reversals \sep
sorting by DCJ operations \sep
genome rearrangements \sep
4-regular graphs \sep
local complementation \sep
%delta-matroids \sep
gene assembly in ciliates
\end{keyword}

\end{frontmatter}

%% ===============================================
%% ===============================================
\section{Introduction} \label{sec:introd}
Edit distance measures for genomes can be used to approximate evolutionary distance between their corresponding species. A number of genome transformations have been used to define edit distance measures. In this paper we consider the well-studied chromosome transformation called \emph{reversal}, which is an inversion of part of a chromosome \cite{PevznerBook}. If two given chromosomes can be transformed into each other through reversals, then the difference between these two chromosomes can be represented by a permutation, where the identity permutation corresponds to equality. As a result, transforming one chromosome into the other using reversals is called \emph{sorting} by reversals. The \emph{reversal distance} is the \emph{least} number of reversals needed to accomplish this transformation (i.e., to sort the permutation by reversals). In \cite{DBLP:journals/jacm/HannenhalliP99} a formula is given for the reversal distance, leading to an efficient algorithm to compute reversal distance. The proof of that formula uses a notion called the breakpoint graph. Subsequent streamlining of this proof led to the introduction of additional notions \cite{siamcomp/KaplanST99} such as the overlap graph and a corresponding graph operation. A reversal can be seen as a special case of a double-cut-and-join (DCJ) operation that can operate either within a chromosome or between chromosomes. Similar as for reversals, one can define a notion of DCJ distance, and a formula for DCJ distance is given in \cite{wabi/BergeronMS06/DCJformula}.

The theory of circuit partitions of 4-regular multigraphs was initiated in \cite{kotzig1968} and is currently well-developed with extensions and generalizations naturally leading into the domains of, e.g., linear algebra and matroid theory. In this paper we show that this theory can be used to study sorting by reversals, and more generally sorting by DCJ operations. Moreover, we show how various aspects of the theory of circuit partitions of 4-regular multigraphs relate to the topic of sorting by reversals. In particular, we show that the notion of an overlap graph and its corresponding graph operation from the context of sorting by reversals \cite{siamcomp/KaplanST99} correspond to circle graphs and the (looped) local complementation operation, respectively, of the theory of circuit partitions of 4-regular multigraphs. This leads to a reformulation of the Hannenhalli-Pevzner theorem \cite{DBLP:journals/jacm/HannenhalliP99} to delta-matroids. We remark however that the theory of circuit partitions of 4-regular multigraphs is too broad to fully cover here, and so various references are provided in the paper for more information.
%In particular, instead of using the breakpoint graph, we are considering circuit partitions of 4-regular multigraphs.
%The contribution of this paper is to show how a 4-regular multigraph, with associated circuit partitions, can be defined for two pairs of chromosomes. This leads to a wealth of known results and combinatorial structures that can be used to study sorting by reversals.

It has been shown that various other research topics can also be fit into the theory of circuit partitions of 4-regular multigraphs: examples include the theory of ribbon graphs (or embedded graphs) \cite{Bouchet1989/maps_deltam,ChunMoffatt/DeltaM/EmbeddedGraphs} and the theory of gene assembly in ciliates \cite{BH/algebra-Tampa}. As such all these research topics arising from different contexts turn out to be strongly linked, and results from one research topic can often be carried over to another. Indeed, it is not surprising that the Hannenhalli-Pevzner theorem has been independently discovered in the context of gene assembly in ciliates \cite{SuccessfulnessChar_Original,BinarySymmetric/BrijderH12}.

This paper is organized as follows. In Section~\ref{sec:sort_revs} we recall sorting by reversals and in Section~\ref{sec:4reg_graph} we associate a 4-regular multigraph and a pair of circuit partitions to a pair of chromosomes (where one is obtainable from the other by reversals). This leads to a reformulation of a known inequality of the reversal distance in terms of 4-regular multigraphs. In Section~\ref{sec:circle_graphs} we associate a circle graph to the two circuit partitions and we show that the adjacency matrix representation of this circle graph reveals essential information regarding the reversal distance. We recall local complementation in Section~\ref{sec:lc} and the Hannenhalli-Pevzner theorem in Section~\ref{sec:HPthm}, and reformulate the Hannenhalli-Pevzner theorem in terms of delta-matroids in Section~\ref{sec:delta_m}. Before discussing delta-matroids, we also recall sorting by DCJ operations in Section~\ref{sec:DCJ_multiple}. We discuss the close connection of sorting by reversals and gene assembly in ciliates in Section~\ref{sec:ga_ciliates}. Finally, a discussion is given in Section~\ref{sec:disc}.

\section{Sorting by reversals} \label{sec:sort_revs}

In this section we briefly and informally recall notions concerning sorting by reversals. See, e.g., the text books \cite{PevznerBook,genomeRearr/Fertin/2009} for a more formal and extensive treatment.

\newcommand{\signp}[9]{
\begin{tikzpicture}[auto,x=0.8cm,y=0.8cm]
\draw (0,1) -- (7,1) ;
\draw (0,0) -- (7,0);
\foreach \x in {0,...,7}
{
  \draw (\x,0) -- (\x,1);
}
\draw (0,0) +(0.5,0.5) node{${#1}$};
\draw (1,0) +(0.5,0.5) node{${#2}$};
\draw (2,0) +(0.5,0.5) node{${#3}$};
\draw (3,0) +(0.5,0.5) node{${#4}$};
\draw (4,0) +(0.5,0.5) node{${#5}$};
\draw (5,0) +(0.5,0.5) node{${#6}$};
\draw (6,0) +(0.5,0.5) node{${#7}$};
\draw [
    thick,
    decoration={
        brace,
        mirror,
        raise=0.1cm
    },
    decorate
] ({#8},0) -- ({#9},0)
node {};
%\draw ([xshift=0.5]({#8},0)) -- ({#9},0);
\end{tikzpicture}
}

\begin{figure}
\begin{center}
\resizebox{\textwidth}{!}{
\begin{tikzpicture}
\node (n1) at (0,0) {\signp{1}{-6}{7}{4}{-2}{-5}{3}{0}{0}};
\node (n2) at (8,0) {\signp{1}{2}{3}{4}{5}{6}{7}{0}{0}};
\node (m1) at (0,-1) {chromosome of some species $A$};
\node (m2) at (8,-1) {chromosome of some species $B$};
\end{tikzpicture}
}
\end{center}
\caption{Difference of two chromosomes of species $A$ and $B$. The difference is described with respect to the ordering of the chromosome segments of $B$.}
\label{fig:sign_perm}
\end{figure}

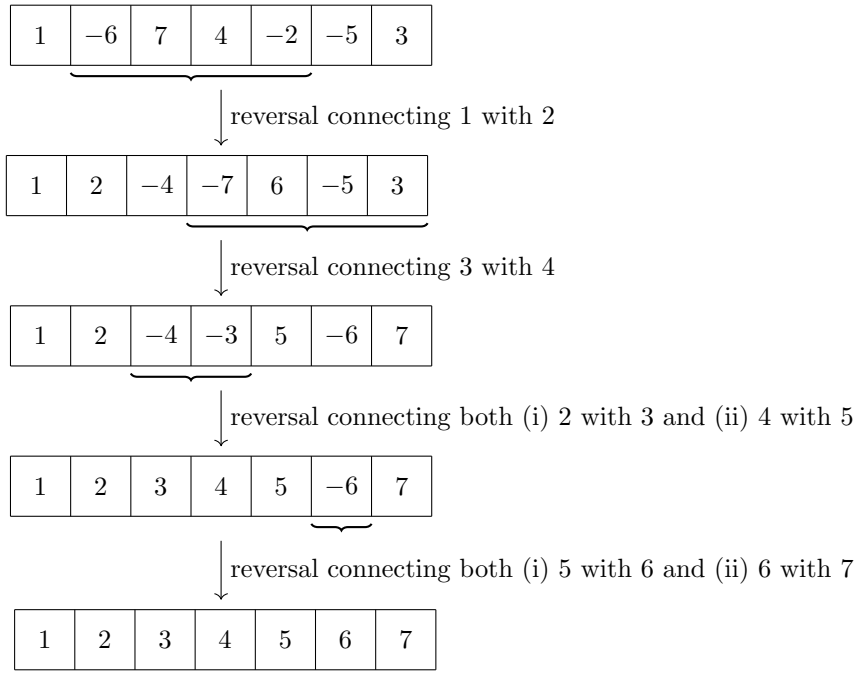
\begin{figure}
\begin{center}
\begin{tikzpicture}
%\resizebox{\textwidth}{!}{
\node (n1) at (0,8) {\signp{1}{-6}{7}{4}{-2}{-5}{3}{1}{5}};
%\draw (0,8) node{\underbracket{$7$}};
\node (n2) at (0,6) {\signp{1}{2}{-4}{-7}{6}{-5}{3}{3}{7}};
\node (n3) at (0,4) {\signp{1}{2}{-4}{-3}{5}{-6}{7}{2}{4}};
\node (n4) at (0,2) {\signp{1}{2}{3}{4}{5}{-6}{7}{5}{6}};
\node (n5) at (0,0) {\signp{1}{2}{3}{4}{5}{6}{7}{0}{0}};
\draw[->] (n1) to node [right] {reversal connecting $1$ with $2$} (n2);
\draw[->] (n2) to node [right] {reversal connecting $3$ with $4$} (n3);
\draw[->] (n3) to node [right] {reversal connecting both (i) $2$ with $3$ and (ii) $4$ with $5$} (n4);
\draw[->] (n4) to node [right] {reversal connecting both (i) $5$ with $6$ and (ii) $6$ with $7$} (n5);
\end{tikzpicture}
%}
\end{center}
\caption{Optimal sorting of the signed permutation of Figure~\ref{fig:sign_perm} by reversals.}
\label{fig:sign_perm_sort}
\end{figure}

During the evolution of species, various types of modifications of the genome may occur. One such modification is the inversion (i.e., rotation by 180 degrees) of part of a chromosome, called a \emph{reversal}. In Section~\ref{sec:DCJ_multiple}, we recall that this inversion is the result of a so-called double-cut-and-join operation. The \emph{reversal distance} between two given chromosomes is the minimal number of reversals needed to transform one into the other, and it is a measure of the evolutionary distance between the two species. Figure~\ref{fig:sign_perm} shows two toy chromosomes which have seven segments in common, but their relative positions and orientations differ (by, e.g., $-5$ we mean segment $5$ in inverted orientation, i.e., rotated by 180 degrees). Figure~\ref{fig:sign_perm_sort} shows that the reversal distance between the chromosomes of Figure~\ref{fig:sign_perm} is at most four.
%In fact, it can be shown that this distance is exactly four.

We can concisely describe the chromosomes of Figure~\ref{fig:sign_perm} by the sequences $(1, -6, 7, 4, -2, -5, 3)$ and $(1, 2, 3, 4, 5, 6, 7)$. These sequences are called \emph{signed permutations} in the literature, and we will adopt this convention here (although we won't treat them as permutations in this paper). A signed permutation of the form $(1, 2, \cdots, n)$ is called the \emph{identity permutation}. The \emph{reversal distance} of a single signed permutation $\pi$, denoted by $d_r(\pi)$, is the minimal number of reversals needed to transform $\pi$ into the identity permutation. Thus, for $\pi = (1, -6, 7, 4, -2, -5, 3)$ of Figure~\ref{fig:sign_perm}, we have $d_r(\pi) \leq 4$ by Figure~\ref{fig:sign_perm_sort}.

Viewing the chromosome of species $B$ of Figure~\ref{fig:sign_perm} as the ``sorted'' chromosome, the transformation using reversals of the chromosome of species $A$ to the chromosome of species $B$ is called \emph{sorting by reversals}.
%The literature on the theory of sorting by reversals is vast, see, e.g., the text books .. .. and the references therein.

\begin{remark}\label{rem:rev_dist_one_diff}
For any signed permutation $\pi = (\pi_1, \pi_2, \cdots, \pi_n)$, the signed permutation $\overline{\pi} = (-\pi_n, -\pi_{n-1}, \cdots, -\pi_1)$ represents the same chromosome (just considered 180 degrees rotated). The reversal distance of $\pi$ and $\overline{\pi}$ may however differ (this difference is, of course, at most one because $\pi$ can be turned into $\overline{\pi}$ using one ``full'' reversal). Therefore, one could argue that a better notion of the reversal distance of $\pi$ would be the minimum value of the reversal distances of both $\pi$ and $\overline{\pi}$. Equivalently, one could view both $(1, 2, \cdots, n)$ and $(-n, -(n-1), \cdots, -1)$ as identity permutations. We revisit this issue in Sections~\ref{sec:4reg_graph} and \ref{sec:DCJ_multiple}.
%\corr{Welke artikelen hebben deze conventie??}
\end{remark}

\section{Four-regular multigraphs} \label{sec:4reg_graph}

In this paper, \emph{graphs} are allowed to have loops but not multiple edges, and \emph{multigraphs} are allowed to have both loops and multiple edges. We denote the sets of vertices and edges of a (multi)graph $G$ by $V(G)$ and $E(G)$, respectively. For graphs, each edge $e \in E(G)$ is either of the form $\{v\}$ (i.e., $v$ is a looped vertex) or of the form $\{v_1,v_2\}$ (i.e., there is an edge between $v_1$ and $v_2$). A vertex $v$ is said to be \emph{isolated} if no edge is incident to it (in particular, $v$ is not looped). A \emph{4-regular multigraph} is a multigraph where each vertex has degree $4$, a loop counting as two.

A standard tool for the calculation of the reversal distance is the so-called breakpoint graph of a signed permutation. In this section we instead assign a 4-regular multigraph and two of its circuit partitions to a signed permutation. The main reason for considering this graph instead of the breakpoint graph is that in this way we can use the vast amount of literature concerning the theory of circuit partitions of 4-regular multigraphs, which began with the seminal paper of Kotzig \cite{kotzig1968}, and extended, e.g., in \cite{Abrham1980/EulerTours,Fleischner/EulerianTrails}.
%In this way, we can apply various known results and combinatorial structures to the theory of sorting by reversals (and the related theory of sorting by double-cut-and-join (DCJ) operations).
We also note that a drawback of using the breakpoint graph is that the identity of the vertices is important, while the theory of circuit partitions of 4-regular multigraphs is independent of the identity of the vertices.

\newcommand{\rotnod}[2]{
\node[scale=1.4, rotate={180+#1}] at ({180+90+#1}:\radius-\labelrad) {{#2}};
}

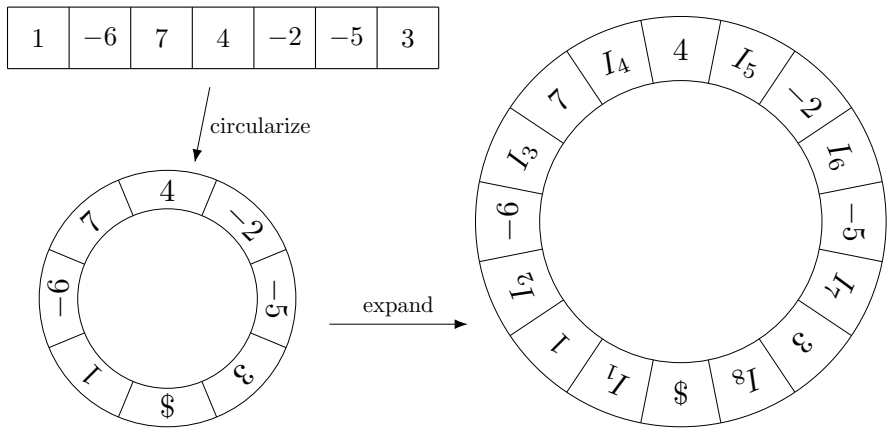
\begin{figure}
\begin{center}
\resizebox{\textwidth}{!}{
\begin{tikzpicture}[scale=4]
\node[scale=1.2] (n1) at (0,0.7) {\signp{1}{-6}{7}{4}{-2}{-5}{3}{0}{0}};
\node (n2) at (-0.1,0.2) {};
\draw[-{Latex[length=2mm]}] (n1) to node [right] {circularize} (n2);
\node (n3) at (0.4,-0.4) {};
\node (n4) at (1.0,-0.4) {};
\draw[-{Latex[length=2mm]}] (n3) to node [above] {expand} (n4);
\begin{scope}[name=scope1,shift={(-0.2,-0.3)}]
\def\radius{0.5cm}
\def\innerradius{0.35cm}
\def\labelrad{0.08cm}
  \draw (0,0) circle (\radius);
  \draw (0,0) circle (\innerradius);
  \rotnod{0}{$\$$}
  \rotnod{-45}{$1$}
  \rotnod{-90}{$-6$}
  \rotnod{-135}{$7$}
  \rotnod{180}{$4$}
  \rotnod{135}{$-2$}
  \rotnod{90}{$-5$}
  \rotnod{45}{$3$}
  \foreach \x in {22.5,67.5,...,360}  \draw (\x:\innerradius) -- (\x:\radius);
\end{scope}
\begin{scope}[name=scope2,shift={(1.8,0)}]
\def\radius{0.8cm}
\def\innerradius{0.55cm}
\def\labelrad{0.13cm}
  \draw (0,0) circle (\radius);
  \draw (0,0) circle (\innerradius);
  \foreach \x in {1,2,...,8}
  {
    \rotnod{22.5-\x*45}{$I_\x$}
  };
  \rotnod{0}{$\$$}
  \rotnod{-45}{$1$}
  \rotnod{-90}{$-6$}
  \rotnod{-135}{$7$}
  \rotnod{180}{$4$}
  \rotnod{135}{$-2$}
  \rotnod{90}{$-5$}
  \rotnod{45}{$3$}
  \foreach \x in {11.25,33.75,...,360}  \draw (\x:\innerradius) -- (\x:\radius);
\end{scope}
\end{tikzpicture}
}
\end{center}
\caption{Circularizing and expanding the signed permutation of Figure~\ref{fig:sign_perm}.}
\label{fig:exp_signed_perm}
\end{figure}

%\def\radius{0.8cm}
%\def\innerradius{0.55cm}
%\def\labelrad{0.15cm}
%\begin{tikzpicture}[scale=4]
%  \draw (0,0) circle (\radius);
%  \draw (0,0) circle (\innerradius);
%  \foreach \x in {1,2,...,8}
%  {
%    \rotnod{22.5-\x*45}{$I_\x$}
%  };
%  \rotnod{0}{$\$$}
%  \rotnod{-45}{$1$}
%  \rotnod{-90}{$-6$}
%  \rotnod{-135}{$7$}
%  \rotnod{180}{$4$}
%  \rotnod{135}{$-2$}
%  \rotnod{90}{$-5$}
%  \rotnod{45}{$3$}
%  \foreach \x in {11.25,33.75,...,360}  \draw (\x:\innerradius) -- (\x:\radius);
%\end{tikzpicture}

We now describe the construction of the 4-regular multigraph (along with two circuit partitions). As usual, the boundaries between adjacent segments of (the chromosomal depiction of) a signed permutation $\pi$ are called \emph{breakpoints}. Because reversals can also be applied on endpoints of a chromosome, we treat the endpoints of a signed permutation as breakpoints as well. We do this by circularizing the signed permutation, see Figure~\ref{fig:exp_signed_perm}. Note, however, that the location of the endpoints is important. Indeed, e.g., the signed permutation $(1, 2, \cdots, n)$, with $n \geq 2$, has a different reversal distance than any of its proper conjugations (i.e., the signed permutations $(i, i+1, \cdots, n, 1, 2, \cdots, i-1)$ for $i \in \{2,\ldots,n\}$). Therefore, we have anchored the two endpoints to a new segment, which is denoted by $\$$. This corresponds to the usual procedure of \emph{framing} the signed permutation in the theory of sorting by reversals, see, e.g., \cite{dam/Bergeron05}. The next step, which we call here the \emph{expand} step, is to insert an intermediate segment $I_i$ between each two adjacent segments, see again Figure~\ref{fig:exp_signed_perm}.

\newcommand{\rnod}[3]{\node [place] (#1) at (-#2*360-90:\radius) {#3};}
\newcommand{\cordgr}{
\rnod{1}{1/32}{$v_0$}
\rnod{2}{3/32}{$v_0$}
\rnod{3}{5/32}{$v_1$}
\rnod{4}{7/32}{$v_6$}
\rnod{5}{9/32}{$v_5$}
\rnod{6}{11/32}{$v_6$}
\rnod{7}{13/32}{$v_7$}
\rnod{8}{15/32}{$v_3$}
\rnod{9}{17/32}{$v_4$}
\rnod{10}{19/32}{$v_2$}
\rnod{11}{21/32}{$v_1$}
\rnod{12}{23/32}{$v_5$}
\rnod{13}{25/32}{$v_4$}
\rnod{14}{27/32}{$v_2$}
\rnod{15}{29/32}{$v_3$}
\rnod{16}{31/32}{$v_7$}
\draw [arr] (1) to node {$I_1$} (2);
\draw [arr] (3) to node {$I_2$} (4);
\draw [arr] (5) to node {$I_3$} (6);
\draw [arr] (7) to node {$I_4$} (8);
\draw [arr] (9) to node {$I_5$} (10);
\draw [arr] (11) to node {$I_6$} (12);
\draw [arr] (13) to node {$I_7$} (14);
\draw [arr] (15) to node {$I_8$} (16);
\draw [arr] (2) to node {$1$} (3);
\draw [arr] (4) to node {$-6$} (5);
\draw [arr] (6) to node {$7$} (7);
\draw [arr] (8) to node {$4$} (9);
\draw [arr] (10) to node {$-2$} (11);
\draw [arr] (12) to node {$-5$} (13);
\draw [arr] (14) to node {$3$} (15);
\draw [arr] (16) to node {$\$$} (1);
}

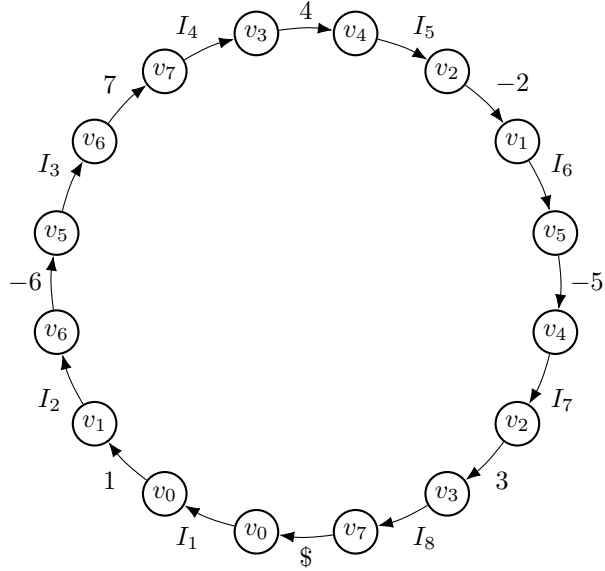
\begin{figure}
\begin{center}
\begin{tikzpicture}[x=1.3cm,y=1.3cm,,place/.style={circle,draw,thick,inner sep=2pt,minimum size=3mm},arr/.style={-{Latex[length=2mm]},bend left=8,auto}]
\def\radius{2.6}
\cordgr
\end{tikzpicture}
\end{center}
\caption{Digraph $D_\pi$ representing the circularized and expanded signed permutation of Figure~\ref{fig:exp_signed_perm}.}
\label{fig:digraph_circ}
\end{figure}

Next, we represent the circularized and expanded signed permutation by a digraph $D_\pi$. In $D_\pi$, each breakpoint is represented by a vertex and each segment is represented by an arrow. The arrow is labeled by the segment $x$ it represents and goes from the left-hand breakpoint of $x$ to the right-hand breakpoint of $x$, see Figure~\ref{fig:digraph_circ}. Moreover, the boundaries/breakpoints of the original segments $i$ and $i+1$ that coincide after the sorting procedure are given a common vertex label $v_i$. For example, the right-hand side breakpoint of segment $3$ and the left-hand side breakpoint of segment $4$ are given a common vertex label $v_3$, see again Figure~\ref{fig:digraph_circ}. Notice that the orientation is important here: the right-hand side breakpoints of segment $1$ and the \emph{right}-hand side breakpoint of segment $-2$ are given a common vertex label because segment $-2$ is segment $2$ in inverted orientation (i.e., rotated by 180 degrees). Since segment $\$$ represents the endpoints, for the arrow corresponding to $\$$, the head vertex is labeled by $v_0$ and the tail vertex is labeled by $v_n$.

\begin{figure}
\begin{center}
\begin{tikzpicture}[auto,x=3cm,y=3cm,every loop/.style={}]
[place/.style={circle,draw,thick,inner sep=0pt,minimum size=3mm}]
\node[place] (n0) at (1.3,1.5) {$v_0$};
\node[place] (n1) at (1.3,2) {$v_1$};
\node[place] (n2) at (3,1) {$v_2$};
\node[place] (n3) at (2,1) {$v_3$};
\node[place] (n4) at (2,0) {$v_4$};
\node[place] (n5) at (-0.5,1) {$v_5$};
\node[place] (n6) at (0.6,1) {$v_6$};
\node[place] (n7) at (1.3,1) {$v_7$};
\draw [bend left=15] (n5) to node {$I_3$} (n6);
\draw [bend right=15] (n5) to node [swap] {$6$} (n6);
\draw [bend left=15] (n2) to node {$I_5$} (n4);
\draw [bend right=15] (n2) to node [swap] {$I_7$} (n4);
\draw [bend left=15] (n1) to node {$2$} (n2);
\draw (n0) to node [swap] {$1$} (n1);
\draw (n0) to node {$\$$} (n7);
\path (n0) edge [loop right] node {$I_1$}();
\draw [bend right=15] (n1) to node [swap] {$I_6$} (n5);
\draw (n1) to node [swap] {$I_2$} (n6);
\draw (n2) to node [swap] {$3$} (n3);
\draw (n3) to node [swap] {$4$} (n4);
\draw [bend right=15] (n3) to node [swap] {$I_8$} (n7);
\draw [bend right=15] (n7) to node [swap] {$I_4$} (n3);
\draw (n7) to node [swap] {$7$} (n6);
\draw [bend left=15] (n4) to node {$5$} (n5);
\end{tikzpicture}
\end{center}
\caption{The $4$-regular multigraph $G_\pi$ of the running example.}
\label{fig:4reg}
\end{figure}
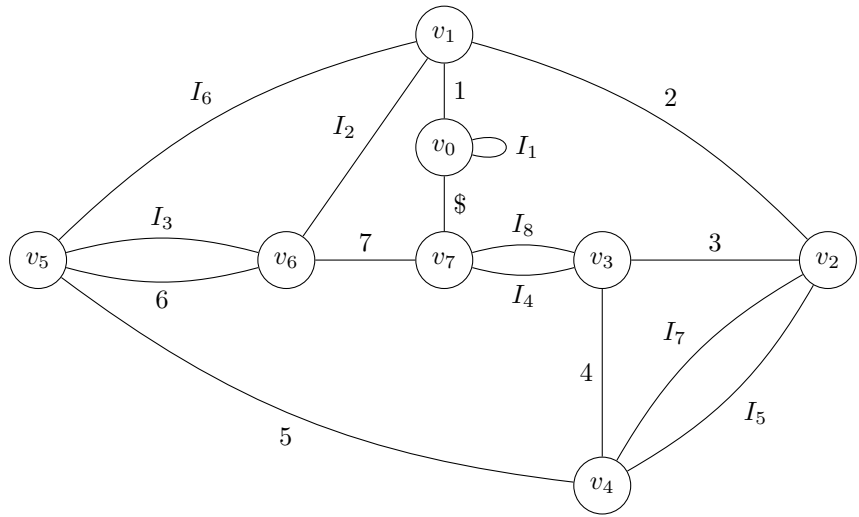

From the digraph $D_\pi$ of Figure~\ref{fig:digraph_circ}, we construct a 4-regular multigraph, denoted by $G_\pi$, by turning each arrow into an undirected edge, removing the signs from the edge labels, and finally merging each two vertices with the same label, see Figure~\ref{fig:4reg}.

Let $G$ be a multigraph and let $l$ be the number of connected components of $G$. A (unoriented) \emph{circuit} of $G$ is a closed walk, without distinguished orientation or starting vertex, allowing repetitions of vertices but not of edges. A \emph{circuit partition} of $G$ is a set $P$ of circuits of $G$ such that each edge of $G$ is in exactly one circuit of $P$. Note that $|P| \geq l$. If $|P| = l$, then we say that $P$ is an \emph{Euler system} of $G$. Note that an Euler system contains an Eulerian circuit (i.e., a circuit visiting each edge exactly once) for each connected component of $G$. In particular, if $G$ is connected, then an Euler system is a singleton containing an Eulerian circuit of $G$.

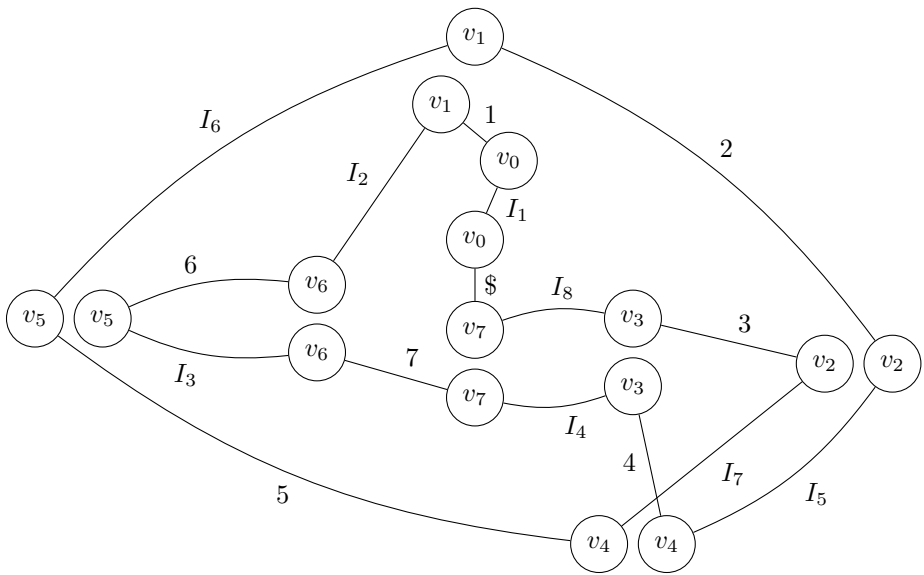
\begin{figure}
\begin{center}
\begin{tikzpicture}[auto,x=3cm,y=3cm,every loop/.style={}]
[place/.style={circle,draw,thick,inner sep=0pt,minimum size=3mm}]
\node[place] (n0a) at (1.3,1.35) {$v_0$};
\node[place] (n0b) at (1.45,1.7) {$v_0$};
\node[place] (n1a) at (1.15,1.95) {$v_1$};
\node[place] (n1b) at (1.3,2.25) {$v_1$};
\node[place] (n2a) at (2.85,0.8) {$v_2$};
\node[place] (n2b) at (3.15,0.8) {$v_2$};
\node[place] (n3a) at (2,1) {$v_3$};
\node[place] (n3b) at (2,0.7) {$v_3$};
\node[place] (n4a) at (1.85,0) {$v_4$};
\node[place] (n4b) at (2.15,0) {$v_4$};
\node[place] (n5a) at (-0.35,1) {$v_5$};
\node[place] (n5b) at (-0.65,1) {$v_5$};
\node[place] (n6a) at (0.6,0.85) {$v_6$};
\node[place] (n6b) at (0.6,1.15) {$v_6$};
\node[place] (n7a) at (1.3,0.95) {$v_7$};
\node[place] (n7b) at (1.3,0.65) {$v_7$};
\draw [bend left=15] (n5a) to node {$6$} (n6b);
\draw [bend right=15] (n5a) to node [swap] {$I_3$} (n6a);
\draw [bend left=15] (n2b) to node {$I_5$} (n4b);
\draw (n2a) to node {$I_7$} (n4a);
\draw [bend left=15] (n1b) to node {$2$} (n2b);
\draw (n0b) to node [swap] {$1$} (n1a);
\draw (n0a) to node {$\$$} (n7a);
\draw (n0b) to node [pos=0.1] {$I_1$} (n0a);
\draw [bend right=15] (n1b) to node [swap] {$I_6$} (n5b);
\draw (n1a) to node [swap] {$I_2$} (n6b);
\draw (n2a) to node [swap] {$3$} (n3a);
\draw (n3b) to node [swap, pos=0.3] {$4$} (n4b);
\draw [bend right=15] (n3a) to node [swap, pos=0.4] {$I_8$} (n7a);
\draw [bend right=15] (n7b) to node [swap] {$I_4$} (n3b);
\draw (n7b) to node [swap] {$7$} (n6a);
\draw [bend left=15] (n4a) to node {$5$} (n5b);
\end{tikzpicture}
\end{center}
\caption{Circuit partition $P_A$ of $G_\pi$ representing the chromosome of species $A$ of Figure~\ref{fig:sign_perm}.}
\label{fig:4reg_splitA}
\end{figure}

\begin{figure}
\begin{center}
\begin{tikzpicture}[auto,x=3cm,y=3cm,every loop/.style={}]
[place/.style={circle,draw,thick,inner sep=0pt,minimum size=3mm}]
\node[place] (n0a) at (1.3,1.5) {$v_0$};
\node[place] (n0b) at (1.6,1.5) {$v_0$};
\node[place] (n1a) at (1.45,2) {$v_1$};
\node[place] (n1b) at (1.15,2) {$v_1$};
\node[place] (n2a) at (3,1) {$v_2$};
\node[place] (n2b) at (3.2,0.8) {$v_2$};
\node[place] (n3a) at (2.2,0.8) {$v_3$};
\node[place] (n3b) at (2,1) {$v_3$};
\node[place] (n4a) at (2,0) {$v_4$};
\node[place] (n4b) at (2.2,-0.2) {$v_4$};
\node[place] (n5a) at (-0.5,0.85) {$v_5$};
\node[place] (n5b) at (-0.5,1.15) {$v_5$};
\node[place] (n6a) at (0.6,0.85) {$v_6$};
\node[place] (n6b) at (0.6,1.15) {$v_6$};
\node[place] (n7a) at (1.15,1) {$v_7$};
\node[place] (n7b) at (1.45,1) {$v_7$};
\draw [bend left=15] (n5b) to node {$I_3$} (n6b);
\draw [bend right=15] (n5a) to node [swap] {$6$} (n6a);
\draw [bend left=15] (n2b) to node {$I_5$} (n4b);
\draw [bend right=15] (n2b) to node [swap] {$I_7$} (n4b);
\draw [bend left=15] (n1a) to node {$2$} (n2a);
\draw (n0a) to node [swap, pos=0.6] {$1$} (n1a);
\draw (n0a) to node [pos=0.2] {$\$$} (n7a);
\path (n0b) edge [loop right] node {$I_1$}();
\draw [bend right=15] (n1b) to node [swap] {$I_6$} (n5b);
\draw (n1b) to node [swap] {$I_2$} (n6b);
\draw (n2a) to node [swap] {$3$} (n3a);
\draw (n3a) to node [swap] {$4$} (n4a);
\draw [bend right=15] (n3b) to node [swap] {$I_8$} (n7b);
\draw [bend right=15] (n7b) to node [swap] {$I_4$} (n3b);
\draw (n7a) to node [swap] {$7$} (n6a);
\draw [bend left=15] (n4a) to node {$5$} (n5a);
\end{tikzpicture}
\end{center}
\caption{Circuit partition $P_B$ of $G_\pi$ containing a circuit $C = (1,2, \cdots,7,\$)$ that represents the chromosome of species $B$ of Figure~\ref{fig:sign_perm}.}
\label{fig:4reg_splitB}
\end{figure}
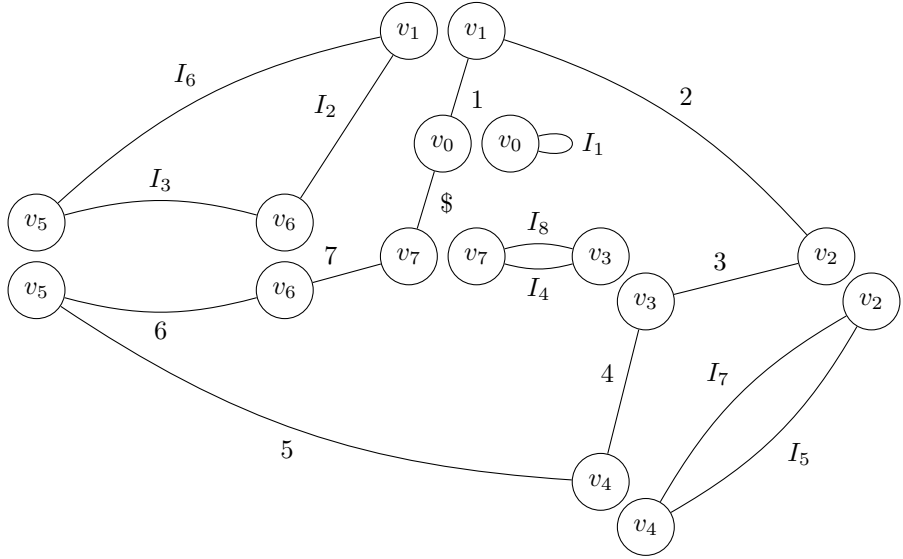

As illustrated in Figure~\ref{fig:4reg_splitA}, the circularized and expanded signed permutation of Figure~\ref{fig:exp_signed_perm} belongs to a particular circuit partition $P_A$ of the 4-regular multigraph $G_\pi$ of Figure~\ref{fig:4reg} (this can also be verified by comparing Figure~\ref{fig:4reg_splitA} with Figure~\ref{fig:digraph_circ}). In this way, $P_A$ is the circuit partition belonging to the chromosome of species $A$. Notice that $P_A$ is an Euler system, in fact, since $G_\pi$ is connected for every signed permutation $\pi$, $P_A$ contains an Eulerian circuit of $G_\pi$. Another circuit partition $P_B$ is illustrated in Figure~\ref{fig:4reg_splitB}. It is the unique circuit partition that includes the circuit $C = (1,2, \cdots,7,\$)$. As such, $P_B$ is the circuit partition belonging to the chromosome of species $B$. Notice that, besides $C$, $P_B$ contains four other circuits in this example. Each of these four circuits consists of intermediate segments (recall that these are the segments of the form $I_i$ for some $i$).

We remark that Figures~\ref{fig:4reg_splitA} and \ref{fig:4reg_splitB} (corresponding to $P_A$ and $P_B$, respectively) can be obtained from $G_\pi$ by ``splitting'' each vertex in an appropriate way. This splitting is considered in \cite{EHG/CyclicGraphDecomp} in the context of gene assembly in ciliates (we recall gene assembly in ciliates in Section~\ref{sec:ga_ciliates}).

While we do not recall the notion of a cycles of a signed permutation (see, e.g., \cite{DBLP:journals/jacm/HannenhalliP99,dam/Bergeron05}), we mention that it is easy to verify that these cycles correspond one-to-one to circuits of intermediate segments of the circuit partition $P_B$.
%containing the ``identity'' circuit $C = (1,2, \cdots,n,\$)$.
By using 4-regular multigraphs, we have given these cycles a more ``physical'' interpretation, cf.\ Figure~\ref{fig:4reg_splitB}.

Let $c(\pi)$ be the number of cycles of a signed permutation $\pi$. The following result is well known (in fact, this result has been extended into an equality in \cite{DBLP:journals/jacm/HannenhalliP99}).
\begin{theorem} [\cite{siamcomp/BafnaP96}] \label{thm:rdist_ineq}
Let $\pi$ be a signed permutation with $n$ elements. Then $d_r(\pi) \geq n+1 - c(\pi)$.
\end{theorem}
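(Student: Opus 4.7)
The plan is to show, following the classical Bafna--Pevzner strategy but phrased in the 4-regular multigraph language just developed, that $c$ changes by at most $1$ under a single reversal, and to combine this with the boundary value $c(\mathrm{id}) = n+1$. Given these two facts, any optimal sorting $\pi = \pi_0 \to \pi_1 \to \cdots \to \pi_k = (1,2,\ldots,n)$ satisfies
\[
(n+1) - c(\pi) \;=\; c(\pi_k) - c(\pi_0) \;\leq\; k \;=\; d_r(\pi),
\]
which is exactly the desired inequality.

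First I would pin down the boundary value. In the identity permutation, any two adjacent real segments $i$ and $i+1$ share the vertex $v_i$ by the very definition of the vertex labels, so the intermediate segment sitting between them has both endpoints at $v_i$, i.e.\ it is a self-loop at $v_i$; the same holds for $I_1$ and $I_{n+1}$ around $\$$. Since the main circuit $C = (1,2,\ldots,n,\$)$ of $P_B$ already uses the two real half-edges at every vertex, $P_B$ is forced to pair the remaining two intermediate half-edges together, and each self-loop therefore forms its own intermediate circuit, giving $c(\mathrm{id}) = n+1$.

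The core of the proof is the claim that a single reversal changes $c$ by at most $1$. Using the identification of $c(\pi)$ with the number of intermediate circuits of $P_B$ recorded just before the theorem, I would isolate the $2$-regular sub-multigraph $H_\pi \subseteq G_\pi$ consisting of the intermediate edges together with the intermediate-half-edge pairing of $P_B$ at each vertex; then $c(\pi)$ is simply the number of cycles of $H_\pi$. A reversal cuts $\pi$ between two pairs of adjacent segments and re-glues in the opposite orientation; since the real edges of $G_\pi$ are determined only by the identity (segment $k$ always joins $v_{k-1}$ and $v_k$, and $\$$ always joins $v_n$ and $v_0$), only two intermediate edges are affected, and a direct computation with vertex labels at the two cut breakpoints shows that their endpoint pairs $\{u_1,u_2\},\{w_1,w_2\}$ are replaced by $\{u_1,w_1\},\{u_2,w_2\}$. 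In other words, a reversal implements a single 2-edge switch on $H_\pi$.

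The last step is the standard case analysis of such a 2-switch on a 2-regular multigraph: if the two switched edges lie on distinct cycles they merge into one (so $c$ drops by $1$), and if they lie on the same cycle the result is either the same cycle or two cycles (so $c$ stays the same or grows by $1$); the loop cases, in which one or both of $\{u_1,u_2\}$ and $\{w_1,w_2\}$ is a self-loop, are covered by the same elementary argument. The main obstacle is thus the verification that a reversal really does implement a clean 2-switch of intermediate edges of $G_\pi$, which is just the vertex-label bookkeeping around the two cut breakpoints sketched above; once that is in place, the cycle-count inequality is essentially immediate.
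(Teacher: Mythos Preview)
Your argument is correct and follows exactly the proof idea the paper sketches immediately after the statement: establish $c(\mathrm{id})=n+1$ and show that a single reversal changes $c$ by an element of $\{-1,0,1\}$, then telescope along an optimal sorting sequence. One small caution: you use $H_\pi$ for the $2$-regular intermediate-edge subgraph, but the paper reserves $H_\pi$ for the circle graph in Section~\ref{sec:circle_graphs}, so you should pick a different symbol to avoid a clash.
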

The proof idea of Theorem~\ref{thm:rdist_ineq} is to show that (1) if $\pi$ is the identity permutation, then $c(\pi)=n+1$ and (2) if $\pi'$ is obtained from $\pi$ by applying a single reversal, then $c(\pi') - c(\pi) \in \{-1,0,1\}$.

We remark that the inequality of Theorem~\ref{thm:rdist_ineq} usually takes the form $d_r(\pi) \geq n' - c(\pi)$, where $n'$ is the number of segments of the framed/anchored signed permutation and is natural when segment $\$$ is instead denoted by $n' = n+1$.

For our running example we see by Figure~\ref{fig:4reg_splitB} that $c(\pi)=4$. Thus $d_r(\pi) \geq 7+1-4=4$. We have seen in Section~\ref{sec:sort_revs}, and in particular Figure~\ref{fig:sign_perm_sort}, that $d_r(\pi) \leq 4$. Consequently, $d_r(\pi)=4$.

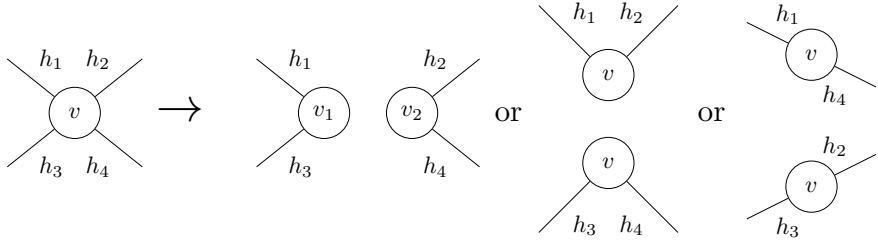
\begin{figure}
\begin{center}
\setlength{\unitlength}{1cm}
%\fbox{
\scalebox{0.9}{
\begin{picture}(12.5,3)
\put(-1,0){
\begin{tikzpicture}[auto,x=1.3cm,y=1.3cm]
[place/.style={circle,draw,thick,inner sep=0pt}]
\node[place] (n1) at (1,0.8) {$v$};
\node[place,draw=none] (ntl) at (0,1.6) {};
\node[place,draw=none] (ntr) at (2,1.6) {};
\node[place,draw=none] (nbl) at (0,0) {};
\node[place,draw=none] (nbr) at (2,0) {};
\draw (ntl) to node {$h_1$} (n1);
\draw (ntr) to node [swap] {$h_2$} (n1);
\draw (nbl) to node [swap] {$h_3$} (n1);
\draw (nbr) to node {$h_4$} (n1);
\end{tikzpicture}}
\put(2,1.3){\scalebox{2}{$\rightarrow$}}
\put(2.8,0){\begin{tikzpicture}[auto,x=1.3cm,y=1.3cm]
[place/.style={circle,draw,thick,inner sep=0pt}]
\node[place] (n1) at (1,0.8) {$v_1$};
\node[place] (n2) at (2,0.8) {$v_2$};
\node[place,draw=none] (ntl) at (0,1.6) {};
\node[place,draw=none] (ntr) at (3,1.6) {};
\node[place,draw=none] (nbl) at (0,0) {};
\node[place,draw=none] (nbr) at (3,0) {};
\draw (ntl) to node {$h_1$} (n1);
\draw (ntr) to node [swap] {$h_2$} (n2);
\draw (nbl) to node [swap] {$h_3$} (n1);
\draw (nbr) to node {$h_4$} (n2);
\end{tikzpicture}}
\put(7,1.3){\scalebox{1.3}{or}}
\put(7,-1){\begin{tikzpicture}[auto,x=1.3cm,y=1.3cm]
[place/.style={circle,draw,thick,inner sep=0pt}]
\node[place] (n2) at (1,1) {$v$};
\node[place] (n1) at (1,2) {$v$};
\node[place,draw=none] (ntl) at (0,3) {};
\node[place,draw=none] (ntr) at (2,3) {};
\node[place,draw=none] (nbl) at (0,0) {};
\node[place,draw=none] (nbr) at (2,0) {};
\draw (ntl) to node {$h_1$} (n1);
\draw (ntr) to node [swap] {$h_2$} (n1);
\draw (nbl) to node [swap] {$h_3$} (n2);
\draw (nbr) to node {$h_4$} (n2);
\end{tikzpicture}}
\put(10,1.3){\scalebox{1.3}{or}}
\put(10,-0.7){\begin{tikzpicture}[auto,x=1.3cm,y=1.3cm]
[place/.style={circle,draw,thick,inner sep=0pt}]
\node[place] (n2) at (1,0.5) {$v$};
\node[place] (n1) at (1,2) {$v$};
\node[place,draw=none] (h1) at (0,2.5) {};
\node[place,draw=none] (h4) at (2,1.5) {};
\node[place,draw=none] (h3) at (0,0) {};
\node[place,draw=none] (h2) at (2,1) {};
\draw (h1) to node {$h_1$} (n1);
\draw (h2) to node [swap] {$h_2$} (n2);
\draw (h3) to node [swap] {$h_3$} (n2);
\draw (h4) to node {$h_4$} (n1);
\end{tikzpicture}}
\end{picture}
}
%}
\end{center}
\caption{The three possible routes a circuit partition can take.}
\label{fig:detach}
\end{figure}

Notice that a circuit partition takes, for each vertex of the 4-regular multigraph, one of three possible routes, see Figure~\ref{fig:detach}. Since care must be taken in the case of loops, the four $h_i$'s are not edges but actually ``half-edges'', where two half-edges form an edge. If circuit partitions $P_1$ and $P_2$ take a different route at each vertex of $G$, then we say that $P_1$ and $P_2$ are \emph{supplementary}. Notice that $P_A$ and $P_B$ are supplementary circuit partitions.

For a given circuit partition $P$ and vertex $v$ of $G$, let $P'$ and $P''$ be the circuit partitions obtained from $P$ by changing the route of $P$ at vertex $v$. It is well known that the cardinalities of two of $\{P,P',P''\}$ are equal, to say $k$, and the third is of cardinality $k+1$.

\begin{remark}\label{rem:rev_dist_one_diff_fourreg}
In terms of 4-regular multigraphs, the issue discussed in Remark~\ref{rem:rev_dist_one_diff} translates to the question of whether the anchor should be $\$$ or $-\$$ (in other words, $\$$ in inverted orientation). We assume the former, but the latter anchor is equally valid and may sometimes obtain a reversal distance that is one smaller. We revisit this issue in Section~\ref{sec:DCJ_multiple}.
\end{remark}

\section{Circle graphs} \label{sec:circle_graphs}

\begin{figure}
\begin{center}
\begin{tikzpicture}[x=1.3cm,y=1.3cm,,place/.style={circle,draw,thick,inner sep=2pt,minimum size=3mm},arr/.style={-,bend left=8,auto,text=white,dashed}]
\def\radius{2.6}
\cordgr
\draw [bend right=15] (1) to node {} (2);
\draw (3) to node {} (11);
\draw [bend right=15] (4) to node {} (6);
\draw (5) to node {} (12);
\draw (7) to node {} (16);
\draw (8) to node {} (15);
\draw [bend right=15] (9) to node {} (13);
\draw (10) to node {} (14);
\end{tikzpicture}
\end{center}
\caption{Chord diagram corresponding to the Eulerian circuit depicted in Figure~\ref{fig:4reg_splitA}.}
\label{fig:chord_diagram}
\end{figure}
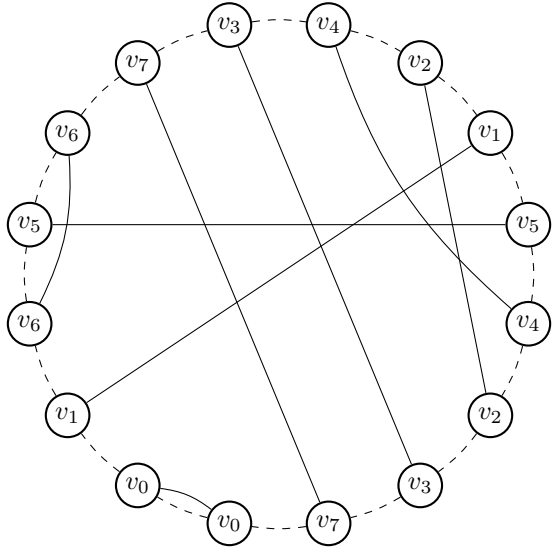

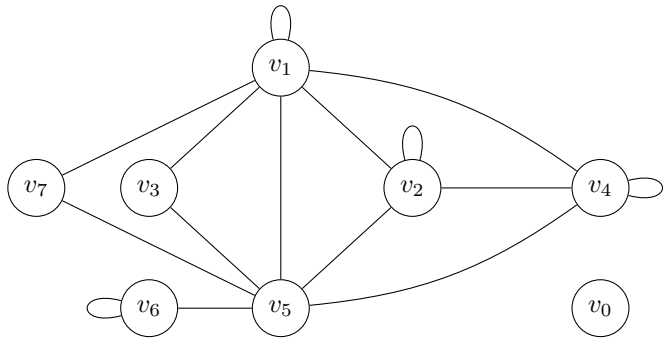
\begin{figure}
\begin{center}
%\fbox{
\begin{tikzpicture}[auto,x=2.5cm,y=1.6cm,every loop/.style={}]
[place/.style={circle,draw,thick,inner sep=0pt,minimum size=3mm}]
\node[place] (n0) at (2.7,0) {$v_0$};
\node[place] (n1) at (1,2) {$v_1$};
\node[place] (n2) at (1.7,1) {$v_2$};
\node[place] (n3) at (0.3,1) {$v_3$};
\node[place] (n4) at (2.7,1) {$v_4$};
\node[place] (n5) at (1,0) {$v_5$};
\node[place] (n6) at (0.3,0) {$v_6$};
\node[place] (n7) at (-0.3,1) {$v_7$};
\draw (n1) to node {} (n2);
\draw (n1) to node {} (n3);
\draw [bend left=15] (n1) to node {} (n4);
\draw (n1) to node {} (n5);
\draw (n2) to node {} (n4);
\draw (n2) to node {} (n5);
\draw (n3) to node {} (n5);
\draw [bend left=15] (n4) to node {} (n5);
\draw (n5) to node {} (n6);
\draw (n7) to node {} (n1);
\draw (n7) to node {} (n5);
\path (n1) edge [loop above] node {}();
\path (n2) edge [loop above] node {}();
\path (n4) edge [loop right] node {}();
\path (n6) edge [loop left] node {}();
\end{tikzpicture}
%}
\end{center}
\caption{Circle graph $H_\pi$ for the signed permutation $\pi$ of the running example.}
\label{fig:circle_graph}
\end{figure}

To study the effect of sequences of reversals, we turn to circle graphs. Let us fix two supplementary circuit partitions $P_1$ and $P_2$ of a 4-regular multigraph $G$, where $P_1$ is an Euler system.

A vertex $v$ of $G$ is called \emph{oriented} for $P_1$ with respect to $P_2$ if the circuit partition $P'$ obtained from $P_1$ by changing the route of $P_1$ at $v$ to coincide with the route of $P_2$ at $v$, is an Euler system. We say that vertex $v$ of $G_\pi$ is \emph{oriented} for a signed permutation $\pi$ if $v$ is oriented for $P_A$ with respect to $P_B$. Thus $\{v_1,v_2,v_4,v_6\}$ is the set of oriented vertices of our running example.
%
%Observe that applying a reversal on the breakpoints corresponding to a vertex $v_i$ results in a signed permutation where segments $i$ and $i+1$ are adjacent if and only if $v_i$ is oriented.

To construct the circle graph, we assume first, for convenience, that $G$ is connected, i.e., $P_1$ contains a single Eulerian circuit $C$. We draw $C$ as a circle and connect each two vertices with the same label by a chord to obtain a \emph{chord diagram}. See Figure~\ref{fig:chord_diagram} for the chord diagram of the Eulerian circuit of Figure~\ref{fig:4reg_splitA}. We construct a (looped) \emph{circle graph} $H$ for $G$ with respect to $P_1$ and $P_2$ as follows. The set $V(H) = V(G)$ and two distinct vertices of $H$ are adjacent when the corresponding two chords intersect in the chord diagram. Finally, a loop is added for each vertex that is oriented for $P_1$ with respect to $P_2$. In the general case where $G$ is not necessarily connected, the circle graph for $G$ is the union of the circle graphs of each connected component of $G$. The \emph{circle graph} $H_\pi$ for signed permutation $\pi$ is the circle graph of $G_\pi$ with respect to $P_A$ and $P_B$. The circle graph for our running example is depicted in Figure~\ref{fig:circle_graph}.

\begin{remark} \label{rem:overlap_black_white}
We remark that a circle graph is called an ``overlap graph'' in the literature of sorting by reversals. However, we use here the term circle graph because circle graph is the usual name for this notion in mathematics. Also, the vertices of an overlap graph in the literature on sorting by reversals are often decorated by white or black labels instead of loops (black labels correspond to loops). The rest of this section shows why using loops instead of vertex colors is very useful when going to other combinatorial structures and matrices.
\end{remark}

We now recall the well-known notion of an adjacency matrix of a graph. First, the rows and columns of the matrices we consider in this paper are not ordered, but are instead indexed by finite sets $X$ and $Y$, respectively. We call such matrices $X \times Y$-matrices. Note that the usual notions of rank and nullity of such a matrix $A$ are defined --- they are denoted by $r(A)$ and $n(A)$, respectively. The \emph{adjacency matrix} of a graph $G$, denoted by $A(G)$, is the $V(G) \times V(G)$-matrix over the binary field $GF(2)$ where for $v,v' \in V(G)$, the entry indexed by $(v,v')$ is $1$ if and only if $v$ and $v'$ are adjacent (a vertex $v$ is considered adjacent to itself precisely when $v$ has a loop).

The adjacency matrix $A(H_\pi)$ of the circle graph $H_\pi$ of Figure~\ref{fig:circle_graph} is as follows:
$$
A(H_\pi) =
\bordermatrix{
    & v_0 & v_1 & v_2 & v_3 & v_4 & v_5 & v_6 & v_7 \cr
v_0 &   0 &   0 &   0 &   0 &   0 &   0 &   0 &   0 \cr
v_1 &   0 &   1 &   1 &   1 &   1 &   1 &   0 &   1 \cr
v_2 &   0 &   1 &   1 &   0 &   1 &   1 &   0 &   0 \cr
v_3 &   0 &   1 &   0 &   0 &   0 &   1 &   0 &   0 \cr
v_4 &   0 &   1 &   1 &   0 &   1 &   1 &   0 &   0 \cr
v_5 &   0 &   1 &   1 &   1 &   1 &   0 &   1 &   1 \cr
v_6 &   0 &   0 &   0 &   0 &   0 &   1 &   1 &   0 \cr
v_7 &   0 &   1 &   0 &   0 &   0 &   1 &   0 &   0 \cr
}.
$$

We now recall the following result from \cite[Theorem~4]{LT/BinNullity/09}.
\begin{theorem}[\cite{LT/BinNullity/09}]\label{thm:lt_null_circle}
Let $G$ be a 4-regular multigraph with $l$ connected components and let $P_1$ and $P_2$ be supplementary circuit partitions of $G$ with $P_1$ an Euler system. Let $H$ be the circle graph for $G$ with respect to $P_1$ and $P_2$. Then $n(A(H))=|P_2|-c$.
\end{theorem}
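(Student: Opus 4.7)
The plan is to prove the identity by an inductive argument on $|V(G)|$, after first reducing to the connected case. For the reduction, if $G$ has components $G_1, \ldots, G_l$, then $P_1$ restricts to a single Eulerian circuit of each $G_i$ while $P_2$ restricts to a supplementary circuit partition $P_2^{(i)}$ on $G_i$, the circle graph $H$ is the disjoint union $H_1 \sqcup \cdots \sqcup H_l$, and both $n(A(H))$ and $|P_2|-l = \sum_{i=1}^{l}(|P_2^{(i)}|-1)$ are additive over components. Thus it suffices to prove the identity $n(A(H)) = |P_2|-1$ when $G$ is connected.

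In the connected case, I would induct on $|V(G)|$, the base case $|V(G)|=0$ being vacuous. For the inductive step, pick any $v \in V(G)$ and form a smaller 4-regular multigraph $G^v$ by detaching $v$ along the $P_2$-transition (see Figure~\ref{fig:detach}): delete $v$ and glue each of the two pairs of half-edges paired by $P_2$ at $v$ into a single edge. The partition $P_2$ descends unchanged to a circuit partition $P_2^v$ of $G^v$ with $|P_2^v|=|P_2|$. The partition $P_1$ does not descend directly because it uses a different transition at $v$; one first replaces it by $\tilde P_1$, obtained from $P_1$ by switching its transition at $v$ to that of $P_2$. By the three-values principle recalled just after Figure~\ref{fig:detach}, $|\tilde P_1|=1$ if $v$ is oriented and $|\tilde P_1|=2$ otherwise, and $\tilde P_1$ then descends to an Euler system $P_1^v$ of $G^v$ (in the oriented case $G^v$ stays connected; in the unoriented case it splits into two components).

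Next I would relate the circle graph $H^v$ of $G^v$ with respect to $(P_1^v, P_2^v)$ to the original circle graph $H$. When $v$ has a loop in $H$, the detachment corresponds on $H$ to looped local complementation at $v$ followed by the deletion of $v$; this operation on $A(H)$ is a symmetric Schur-type elimination using the nonzero diagonal entry at $v$, which preserves the nullity and reduces the size by $1$. When $v$ is unlooped but has a neighbor $u$ in $H$, one first applies a pivot on $uv$, which creates a loop at $v$ without changing the nullity of the adjacency matrix, and then reduces to the looped case. When $v$ is unlooped and isolated in $H$, the row and column of $v$ in $A(H)$ are zero, so $n(A(H)) = n(A(H-v))+1$; correspondingly, $G^v$ acquires an extra connected component, so the inductive formula gains an extra $-1$ that cancels the $+1$. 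In each case, applying the inductive hypothesis to $G^v$, $P_1^v$, $P_2^v$ yields $n(A(H)) = |P_2|-1$.

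The main technical obstacle is the bookkeeping of loops on neighbors of $v$ under looped local complementation and pivoting, which must be shown to correspond precisely to the orientation status of those vertices in $G^v$ with respect to $(P_1^v, P_2^v)$. A conceptually clean way to package this correspondence is via Bouchet's isotropic-systems formalism, in which the three transitions at each vertex form a two-dimensional subspace of $GF(2)^2$ and circuit partitions correspond to totally isotropic supplements; the desired identity then emerges from a dimension count for the kernel of the symmetric form encoded by $A(H)$.
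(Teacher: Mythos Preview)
First, note that the paper does not give its own proof of this statement: Theorem~\ref{thm:lt_null_circle} is quoted from \cite{LT/BinNullity/09} and used as a black box. So there is no ``paper's proof'' to compare against; you are supplying an argument where the paper supplies only a citation.

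Your reduction to the connected case and the overall inductive scheme are sound, and the looped case and the unlooped-isolated case are handled correctly (modulo the identification of $H^v$ with $H|v$ in the looped case and with $H-v$ in the isolated case, each of which is a genuine lemma about $\kappa$-moves but a standard one). The base case should really be $|V(G)|=1$ rather than $0$ in the connected setting, but that is cosmetic.

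There is, however, a real error in the unlooped-with-neighbour case. You assert that a pivot on $uv$ ``creates a loop at $v$'', but this is false when $u$ is also unlooped: the principal pivot transform on the $\{u,v\}$-block $\left(\begin{smallmatrix}0&1\\1&0\end{smallmatrix}\right)$ over $GF(2)$ replaces that block by its inverse, which is the same matrix, so both $u$ and $v$ remain unlooped. If $u$ happens to be looped you could simply have chosen $u$ instead of $v$ from the outset, so the substantive missing case is a vertex lying in a loopless connected component of $H$ with at least one edge. A clean fix is to reduce by two vertices at once there: the $\{u,v\}$-block is invertible, its Schur complement preserves nullity and drops the size by $2$, and one checks (via $\mathcal{D}_G(P_1,P_2)$ and Theorem~\ref{thm:dm_circle_CP}) that this Schur complement is exactly $A(H^{uv})$, where $H^{uv}$ is the circle graph of $G$ detached at both $u$ and $v$ along $P_2$ with respect to the descended Euler system. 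Alternatively, as you yourself note at the end, Bouchet's isotropic-systems formalism packages all three cases uniformly and avoids the case split; that, or the closely related trip-matrix argument, is how \cite{LT/BinNullity/09} actually proceeds.
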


We have the following corollary to Theorem~\ref{thm:lt_null_circle}, which considers the case where $H$ is of the form $H_\pi$.
\begin{corollary} \label{cor:nullity_cycles}
Let $\pi$ be a signed permutation. Then $n(A(H_\pi)) = c(\pi)$.
\end{corollary}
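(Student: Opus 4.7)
The plan is to deduce the corollary directly from Theorem~\ref{thm:lt_null_circle} applied to the setup $G = G_\pi$, $P_1 = P_A$, $P_2 = P_B$, $H = H_\pi$, together with the one combinatorial identification $|P_B| = c(\pi) + 1$ that is already essentially spelled out in the preceding discussion.

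First I would verify that the hypotheses of Theorem~\ref{thm:lt_null_circle} are met. The multigraph $G_\pi$ has been observed to be connected for every signed permutation $\pi$, so the number of connected components is $l = 1$. The text has already established that $P_A$ is an Euler system of $G_\pi$ (in fact $P_A$ consists of a single Eulerian circuit, namely the circularized-and-expanded signed permutation) and that $P_A$ and $P_B$ are supplementary. Moreover, $H_\pi$ was defined precisely as the circle graph of $G_\pi$ with respect to $P_A$ and $P_B$. Hence Theorem~\ref{thm:lt_null_circle} applies and yields
\[
n(A(H_\pi)) \;=\; |P_B| - 1.
\]

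Next I would identify $|P_B|$. By construction $P_B$ contains the distinguished circuit $C = (1, 2, \ldots, n, \$)$ corresponding to the identity (target) chromosome, and all remaining circuits of $P_B$ use only the intermediate-segment edges $I_1, \ldots, I_{n+1}$. The text has already noted the bijection between the cycles of $\pi$ (in the classical Hannenhalli--Pevzner sense) and these intermediate-segment circuits of $P_B$. Therefore the number of circuits of $P_B$ other than $C$ equals $c(\pi)$, i.e.\ $|P_B| = c(\pi) + 1$.

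Substituting into the previous equation gives $n(A(H_\pi)) = (c(\pi)+1) - 1 = c(\pi)$, which is the claim. There is no real obstacle in the argument; the only nontrivial content is Theorem~\ref{thm:lt_null_circle} itself, which we are permitted to cite, and the identification of the cycles of $\pi$ with the non-$C$ circuits of $P_B$, which the preceding exposition already records.
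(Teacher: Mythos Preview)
Your proof is correct and follows essentially the same route as the paper's own proof: apply Theorem~\ref{thm:lt_null_circle} with $G = G_\pi$, $P_1 = P_A$, $P_2 = P_B$ (using that $G_\pi$ is connected, so $|P_A| = 1$) to get $n(A(H_\pi)) = |P_B| - 1$, and then invoke the identification $c(\pi) = |P_B| - 1$ from the discussion in Section~\ref{sec:4reg_graph}. The only difference is that you spell out the hypothesis verification more explicitly than the paper does.
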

\begin{proof}
By Theorem~\ref{thm:lt_null_circle}, $n(A(H_\pi)) = |P_B|-|P_A| = |P_B|-1$ since $P_A$ contains an Eulerian circuit of $G_\pi$. Recall from Section~\ref{sec:4reg_graph} that $c(\pi)$ is the number of circuits of $P_B$ excluding the circuit $(1,\ldots,n,\$)$. Thus $c(\pi) = |P_B|-1$.
\end{proof}
Corollary~\ref{cor:nullity_cycles} illustrates the usefulness of using loops instead of vertex colors for circle graphs (cf.\ Remark~\ref{rem:overlap_black_white}).

For our running example, we have that $c(\pi) = |P_B|-1 = 5-1 = 4$, so $n(A(H_\pi)) = c(\pi) =4$.

Using Corollary~\ref{cor:nullity_cycles}, we can translate the inequality of Theorem~\ref{thm:rdist_ineq} as follows.
\begin{theorem} \label{thm:rdist_ineq_circle}
Let $\pi$ be a signed permutation with $n$ elements. Then $d_r(\pi) \geq r(A(H_\pi))$.
\end{theorem}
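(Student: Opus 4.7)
The plan is to combine the inequality of Theorem~\ref{thm:rdist_ineq} with Corollary~\ref{cor:nullity_cycles}, and then convert the nullity that appears into a rank by invoking the rank-nullity theorem on $A(H_\pi)$.

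First, Theorem~\ref{thm:rdist_ineq} gives $d_r(\pi) \geq n+1 - c(\pi)$, and Corollary~\ref{cor:nullity_cycles} lets us rewrite this as $d_r(\pi) \geq n+1 - n(A(H_\pi))$. So all that remains is to show $r(A(H_\pi)) = n+1 - n(A(H_\pi))$, i.e., that the matrix $A(H_\pi)$ has size $(n+1) \times (n+1)$.

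Second, I would check the bookkeeping on the vertex set: by construction in Section~\ref{sec:4reg_graph}, the vertices of $G_\pi$ are exactly $v_0, v_1, \ldots, v_n$ (the label $v_i$ is assigned to the breakpoint shared by segments $i$ and $i+1$ in the sorted chromosome, with $v_0$ and $v_n$ being the endpoints anchored to $\$$). Hence $|V(G_\pi)| = n+1$, and since $V(H_\pi) = V(G_\pi)$, the adjacency matrix $A(H_\pi)$ is an $(n+1) \times (n+1)$-matrix over $GF(2)$. The rank-nullity theorem then yields $r(A(H_\pi)) + n(A(H_\pi)) = n+1$, so $n+1 - n(A(H_\pi)) = r(A(H_\pi))$, and the desired inequality $d_r(\pi) \geq r(A(H_\pi))$ follows.

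There is no real obstacle here; the argument is a direct chaining of Theorem~\ref{thm:rdist_ineq} and Corollary~\ref{cor:nullity_cycles} together with rank-nullity, and the only point that needs a line of justification is the count $|V(H_\pi)| = n+1$. One could observe that the presence of the isolated vertex $v_0$ (as visible in the sample $A(H_\pi)$ displayed above) is harmless: it contributes a zero row and column, thus contributing only to the nullity side of rank-nullity, exactly as needed for the bound to match the form $n+1 - c(\pi)$ coming from Theorem~\ref{thm:rdist_ineq}.
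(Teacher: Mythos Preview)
Your proposal is correct and follows essentially the same route as the paper's own proof: chain Theorem~\ref{thm:rdist_ineq} with Corollary~\ref{cor:nullity_cycles} to obtain $d_r(\pi) \geq n+1 - n(A(H_\pi))$, then use $|V(H_\pi)| = n+1$ and rank--nullity to convert the right-hand side to $r(A(H_\pi))$. The paper states this in a single line, whereas you additionally spell out why $|V(H_\pi)| = n+1$ and remark on the isolated vertex $v_0$, but the underlying argument is identical.
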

\begin{proof}
By Theorem~\ref{thm:rdist_ineq} and Corollary~\ref{cor:nullity_cycles}, $d_r(\pi) \geq n+1 - c(\pi) = n+1 - n(A(H_\pi))$. The result follows by observing that $|V(H_\pi)| = n+1$.
\end{proof}

\begin{lemma} \label{lem:id_perm_zero_rank}
Let $\pi$ be a signed permutation. Then $\pi$ is the identity permutation if and only if $r(A(H_\pi)) = 0$.
\end{lemma}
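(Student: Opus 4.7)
Plan. Both directions rest on the identity $r(A(H_\pi)) = (n+1) - n(A(H_\pi))$ combined with Corollary~\ref{cor:nullity_cycles}, which together give
\[
r(A(H_\pi)) = 0 \iff n(A(H_\pi)) = n+1 \iff c(\pi) = n+1.
\]

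For the forward direction, I would simply invoke Theorem~\ref{thm:rdist_ineq_circle}: if $\pi$ is the identity then $d_r(\pi)=0$, so $r(A(H_\pi)) \le d_r(\pi)=0$ and hence $r(A(H_\pi))=0$.

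For the converse, assume $r(A(H_\pi))=0$, equivalently $c(\pi) = n+1$, i.e., $|P_B|=n+2$. The distinguished circuit $C=(1,2,\ldots,n,\$)$ of $P_B$ already consumes all $n+1$ non-intermediate edges of $G_\pi$, so the remaining $n+1$ circuits of $P_B$ must partition the $n+1$ intermediate edges. A pigeonhole argument (each circuit uses at least one edge) then forces every circuit in this partition to consist of a single edge. Thus every intermediate segment $I_j$ is a loop in $G_\pi$.

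The final step is to deduce $\pi = (1,2,\ldots,n)$ from the fact that every $I_j$ is a loop, which I would argue by induction on $j$. The loop condition on $I_1$ equates the right-end vertex $v_0$ of $\$$ with the left-end vertex of $\pi_1$ in $G_\pi$. A short case distinction on the sign of $\pi_1$ shows that $\pi_1>0$ forces $\pi_1=1$, while $\pi_1<0$ would require a vertex $v_m$ with $m=0$ and $\pi_1=-m$, which is impossible. Inductively, assuming $\pi_j=j$, the right-end vertex of $\pi_j$ is $v_j$, and the loop condition on $I_{j+1}$ equates $v_j$ with the left-end vertex of $\pi_{j+1}$; the case $\pi_{j+1}<0$ would give $\pi_{j+1}=-j$, contradicting distinctness of the absolute values of the entries of $\pi$ (since $|\pi_j|=j$ is already used). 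Hence $\pi_{j+1}=j+1$, completing the induction.

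The main obstacle is this last step---translating the combinatorial loop condition in $G_\pi$ back into the statement $\pi_j=j$---which requires careful bookkeeping of how the $v$-labels attach to the left- and right-end breakpoints of a positively versus negatively signed segment, and crucially uses distinctness of $|\pi_1|,\ldots,|\pi_n|$ to rule out the reflected alternative at each step.
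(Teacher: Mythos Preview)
Your proof is correct. The converse direction is exactly the paper's argument, just written out in full: the paper simply asserts that ``$\pi$ is the identity permutation if and only if each intermediate segment forms a circuit of length $1$ if and only if $c(\pi)=n+1$'' and then invokes Corollary~\ref{cor:nullity_cycles}, whereas you supply the pigeonhole step and the vertex-label induction that justify the first biconditional. Your forward direction, however, is genuinely different: instead of verifying directly that the identity permutation makes every $I_j$ a loop (hence $c(\pi)=n+1$), you appeal to Theorem~\ref{thm:rdist_ineq_circle} via $d_r(\mathrm{id})=0$. Both routes are short; the paper's is more self-contained (it avoids the reversal-distance inequality of Theorem~\ref{thm:rdist_ineq} on which Theorem~\ref{thm:rdist_ineq_circle} rests), while yours trades that for not having to redo the breakpoint bookkeeping in one direction.
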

\begin{proof}
Let $\pi$ have $n$ elements. Note that $\pi$ is the identity permutation if and only if each intermediate segment forms a circuit of length $1$ if and only if $c(\pi) = n+1$. By Corollary~\ref{cor:nullity_cycles}, this is equivalent to $n(A(H_\pi)) = n+1$ and therefore equivalent to $r(A(H_\pi)) = 0$.
\end{proof}

Note that $r(A(H))=0$ simply means that $H$ contains no edges (i.e., consists of only isolated vertices).

\section{Local complementation} \label{sec:lc}

In order to study the effect of reversals on circle graphs, we recall the following graph notions. For a graph $H$ and vertex $v$, the \emph{neighborhood} of $v$ in $H$, denoted by $N_H(v)$, is $\{ v' \in V(G) \mid \{v,v'\} \in E(H), v' \neq v\}$.
\begin{definition}
Let $H$ be a graph and $v$ a looped vertex of $H$. The \emph{local complement} of $H$ at $v$, denoted by $H*v$, is the graph obtained from $H$ by complementing the subgraph induced by $N_H(v)$.\\
In other words, for all $p \subseteq V(H) = V(H*v)$ with $|p| \in \{1,2\}$, we have $p \in E(H*v)$ if and only if either (1) $p \notin E(H)$ and $p \subseteq N_H(v)$ or (2) $p \in E(H)$ and $p \not\subseteq N_H(v)$.
\end{definition}
Moreover, we denote by $H*_c v$ the graph obtained from $H*v$ by removing all edges incident to $v$ (including the loop on $v$). Thus $v$ is an isolated vertex of $H*_c v$. Equivalently, $H *_c v$ complements the subgraph induced by the ``closed neighborhood'' $\{ v' \in V(H) \mid \{v,v'\} \in E(H)\}$. Finally, we denote by $H|v$ the graph obtained from $H*_c v$ by removing the isolated vertex $v$. Note that each of these three operations ($* v$, $*_c v$ and $|v$) are only allowed on looped vertices $v$; we say that such an operation is \emph{applicable} to $H$ if $v$ is a looped vertex of $H$.

The interest of local complement for the topic of sorting by reversals is that it corresponds to applying some particular type of reversal \cite{siamcomp/KaplanST99}.
\begin{theorem}
Let $\pi$ be a signed permutation and let $\pi'$ be the signed permutation obtained from $\pi$ by applying a reversal on the breakpoints corresponding to an oriented vertex $v$. Then $H_{\pi'} = H_\pi *_c v$.
\end{theorem}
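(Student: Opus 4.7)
My plan is to reduce the statement to the classical fact that reversing the sub-arc between the two occurrences of a chord in a chord diagram implements local complementation at that chord. The main work is then identifying what ``reversing a sub-arc'' means on the signed-permutation side and tracking the loops (orientations) separately.

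First I would argue that $\pi \mapsto \pi'$ corresponds to rerouting the Euler system at $v$ on the 4-regular side. Reading $P_A$ as a double-occurrence word $w_\pi$ around the chord diagram and writing it cyclically as $\alpha\, v\, \beta\, v\, \gamma$, the sub-word $\beta$ is precisely the trace of the block of $\pi$ that is flipped by the reversal. Inverting this block in the expanded circularization of $\pi$ produces the expanded circularization of $\pi'$, and so the Euler circuit $P_{A'}$ of $G_{\pi'}$ is encoded by $w_{\pi'} = \alpha\, v\, \beta^R\, v\, \gamma$. This circuit partition is genuinely an Euler system exactly because $v$ is oriented. Under the natural identification on the shared vertex set $\{v_0,\ldots,v_n\}$, which leaves the fixed real-segment cycle $v_0 v_1 \cdots v_n v_0$ intact, the supplementary partition $P_{B'}$ corresponds to $P_B$. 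Hence $H_\pi$ and $H_{\pi'}$ are read off from the same pair of double-occurrence words, related by the arc reversal above.

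Second I would trace the effect of $\beta \mapsto \beta^R$. For two chords $u, w \neq v$, a case analysis on the positions of their endpoints inside $\alpha\gamma$ and $\beta$ shows that the crossing status is preserved unless both chords interlace $v$ in $H_\pi$ (equivalently, each has exactly one endpoint in $\beta$ and one outside), in which case the crossing status flips; this is precisely local complementation at $v$ restricted to $N_{H_\pi}(v)$. For $v$ itself, the two breakpoints of segments $i$ and $i+1$ meet correctly at $v = v_i$ in $\pi'$, so the intermediate segment between them becomes a loop at $v$ in $G_{\pi'}$. This forces the two occurrences of $v$ in $w_{\pi'}$ to be consecutive, so the $v$-chord crosses no other chord; and a direct check of the definition of oriented shows $v$ is not oriented for $(P_{A'}, P_{B'})$ either, since re-rerouting at $v$ recovers $P_A$ and changes the cardinality of the partition. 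Thus $v$ is isolated in $H_{\pi'}$, matching the removal step of $*_c v$. Finally, for a vertex $u \neq v$, an analysis of how the vertex transition at $v$ alters the circuit count of the supplementary partition shows that the oriented status of $u$ flips precisely when $u \in N_{H_\pi}(v)$, which is the loop-toggle rule of local complementation. Assembling these three pieces gives $H_{\pi'} = H_\pi *_c v$.

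The main obstacle is the first step: making precise the correspondence between $G_{\pi'}$ and a rerouted $G_\pi$, since the intermediate-segment edges literally differ between the two multigraphs (most notably, the loop that appears at $v$ in $G_{\pi'}$ has no counterpart in $G_\pi$). Once this correspondence is in place, the chord-crossing portion is essentially the classical arc-reversal theorem on how vertex transitions in an Euler system transform the interlacement graph, and the loop-toggle rule on other vertices is standard in the circle-graph literature.
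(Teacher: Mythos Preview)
The paper does not prove this theorem; it states it and attributes it to Kaplan--Shamir--Tarjan. So there is no proof in the paper to compare against, and your proposal must stand on its own.

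Your overall strategy --- reduce to an arc operation on the double-occurrence word and handle crossings and loops separately --- is the standard one and is sound. But your explicit formula $w_{\pi'}=\alpha\,v\,\beta^R\,v\,\gamma$ is wrong, and it actually contradicts your own later (correct) claim that the two occurrences of $v$ become consecutive in $w_{\pi'}$. The point is that the reversal is \emph{not} a rerouting of the Euler system of $G_\pi$ at $v$: it changes the multigraph, replacing the boundary intermediate-segment edges, one of which becomes the loop at $v$ you mention. Concretely, the block of $\pi$ being reversed carries one of the two occurrences of $v$ with it, so the reversed sub-word is $\beta v$ (or $v\beta$, depending on the sign pattern), not $\beta$ alone; hence $w_{\pi'}=\alpha\,v\,(\beta v)^R\,\gamma=\alpha\,v\,v\,\beta^R\,\gamma$. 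You can check this on the running example: with $w_\pi=v_0v_0\,v_1\,v_6v_5v_6v_7v_3v_4v_2\,v_1\,v_5v_4v_2v_3v_7$ and the reversal at $v_1$, one gets $w_{\pi'}=v_0v_0\,v_1v_1\,v_2v_4v_3v_7v_6v_5v_6\,v_5v_4v_2v_3v_7$, not $v_0v_0\,v_1\,v_2v_4v_3v_7v_6v_5v_6\,v_1\,v_5v_4v_2v_3v_7$.

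This is easy to repair and in fact simplifies the argument. Deleting both copies of $v$ from the correct $w_{\pi'}$ gives $\alpha\,\beta^R\,\gamma$, the same as deleting $v$ from your arc-reversed word, so your crossing analysis for pairs $u,w\neq v$ survives verbatim; and the isolation of $v$ is now immediate from the formula rather than a separate step. For the loops, bypass the multigraph identification you flag as an obstacle and use the direct characterization instead: $v_j$ is looped in $H_\pi$ iff $j$ and $j{+}1$ carry opposite signs in $\pi$, and the reversal flips the sign of exactly the segments inside the block. A one-line check then shows the loop on $u=v_j$ toggles iff exactly one of $j,j{+}1$ lies in the block, which is exactly the interlacement condition $u\in N_{H_\pi}(v)$. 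Your appeal to ``how the vertex transition at $v$ alters the circuit count of the supplementary partition'' presupposes the very $G_\pi\leftrightarrow G_{\pi'}$ identification you correctly worry about, and is unnecessary.
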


It turns out that local complementation has interesting effects on the underlying adjacency matrix (see, e.g., \cite{BH/PivotNullityInvar/09}).
\begin{lemma}\label{lem:null_schur_compl}
Let $H$ be a graph and $v$ a looped vertex of $H$. Then $n(A(H|v)) = n(A(H))$. In other words, $r(A(H *_c v)) = r(A(H))-1$.
\end{lemma}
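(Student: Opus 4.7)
The plan is to exploit the fact that, since $v$ is looped in $H$, the $(v,v)$ entry of $A(H)$ is $1$, so we can perform a Schur-complement reduction over $GF(2)$ with that entry as pivot. First I would write $A(H)$ in block form separating row and column $v$: letting $a$ denote the $v$-column restricted to $V(H) \setminus \{v\}$ (so $a_{v'} = 1$ iff $v' \in N_H(v)$) and $B = A(H-v)$, one has
$$A(H) = \begin{pmatrix} 1 & a^T \\ a & B \end{pmatrix}.$$

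Next I would use elementary row and column operations over $GF(2)$: for each $v' \in N_H(v)$, add row $v$ to row $v'$ and column $v$ to column $v'$. These operations preserve rank and transform $A(H)$ into the block-diagonal matrix $\mathrm{diag}(1,\, B + a a^T)$, giving $r(A(H)) = 1 + r(B + a a^T)$.

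The key identification is that $B + a a^T = A(H|v)$. For distinct $v', v'' \in V(H) \setminus \{v\}$ the entry $(a a^T)_{v',v''}$ is $1$ exactly when $\{v', v''\} \subseteq N_H(v)$, so the off-diagonal flips are precisely those prescribed by local complementation at $v$ on pairs $p$ with $|p| = 2$. On the diagonal, using $a_{v'}^2 = a_{v'}$ in $GF(2)$, the entry $(a a^T)_{v',v'} = a_{v'}$ flips the loop at $v'$ exactly when $v' \in N_H(v)$, which matches the definition for $|p| = 1$. The removal of the $v$-row and $v$-column (corresponding to deleting the now-isolated vertex $v$ after $*_c v$) then yields $A(H|v)$ on the nose.

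Combining, $r(A(H|v)) = r(A(H)) - 1$, and since $|V(H|v)| = |V(H)| - 1$, subtracting from the cardinalities gives $n(A(H|v)) = n(A(H))$. The ``in other words'' formulation follows because $A(H *_c v)$ differs from $A(H|v)$ only by an extra zero row and zero column, so $r(A(H *_c v)) = r(A(H|v)) = r(A(H)) - 1$. The only point that needs care — more bookkeeping than real obstacle — is verifying the diagonal behavior of the Schur-complement update to confirm that the rank-one correction $aa^T$ reproduces the loop-flipping clause of the local-complementation definition, and not merely the off-diagonal clause.
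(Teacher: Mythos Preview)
Your proof is correct and follows essentially the same approach the paper indicates: recognise that $A(H|v)$ is the Schur complement of the $1\times 1$ block at $v$ in $A(H)$, and use the fact that Schur complementation preserves nullity. The paper simply cites these two facts, whereas you carry out the row/column reduction explicitly over $GF(2)$ and verify directly that the rank-one update $aa^T$ matches the definition of local complementation (including the diagonal/loop case); this makes your argument more self-contained but not genuinely different.
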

While Lemma~\ref{lem:null_schur_compl} can be proved directly, another way is to (1) observe that $A(H|v)$ is obtained from $A(H)$ by applying the Schur complementation matrix operation \cite{Schur1917detformula} on the submatrix induced by $\{v\}$ and (2) recall from, e.g., \cite{SchurBook2005} that Schur complementation preserves nullity. We remark that while $A(H|v)$ is obtained from $A(H)$ by applying Schur complementation, $A(H*v)$ is obtained from $A(H)$ by applying the principal pivot transform matrix operation, which is a partial matrix inversion operation, see, e.g., \cite{Tsatsomeros2000151,BH/PivotLoopCompl/09}.

Let $H$ be a graph and let $\sigma = (v_1,\ldots,v_k)$ be a sequence of mutually distinct vertices of $H$.
%We say that $\sigma$ is an \emph{lc-sequence} for $H$ if $\varphi = *_c v_1 *_c v_2 \cdots *_c v_k$ is applicable to $H$ (associativity of $*_c$ is from left to right). We say that an lc-sequence $\sigma$ is \emph{full} if $H \varphi$ contains only isolated vertices.
A sequence $\varphi = *_c v_1 *_c v_2 \cdots *_c v_k$ of $*_c$ operations that is applicable to $H$ (associativity of $*_c$ is from left to right) is called an \emph{lc-sequence} for $H$. We say that an lc-sequence is \emph{full} if $H \varphi$ contains only isolated vertices.

Since (1) by Lemma~\ref{lem:null_schur_compl}, $*_c$ decreases rank by one and (2) a graph $H$ contains only isolated vertices if and only if $r(A(H))=0$, we directly recover the following property observed in \cite[Corollary 4]{LAA/Cooper/2016} (see also \cite[Section~6]{MaxPivotsGraphs/Brijder09}).
\begin{corollary}\label{cor:full_seq_rank}
Let $H$ be a graph and let $\varphi$ be an lc-sequence of length $k$ for $H$. Then $\varphi$ is full if and only if $k = r(A(H))$.
\end{corollary}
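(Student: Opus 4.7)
The plan is to induct on the length $k$ of the lc-sequence, using Lemma~\ref{lem:null_schur_compl} as the engine. The key observation is that whenever $*_c v_i$ is applicable to an intermediate graph $H_i := H *_c v_1 *_c v_2 \cdots *_c v_{i-1}$, Lemma~\ref{lem:null_schur_compl} gives $r(A(H_{i+1})) = r(A(H_i)) - 1$. Applying this $k$ times along the whole sequence $\varphi$ yields the identity
\[
r(A(H\varphi)) = r(A(H)) - k.
\]
I would spell this out by a short induction on $k$: the base case $k=0$ is trivial since $H\varphi = H$, and the inductive step is exactly one application of Lemma~\ref{lem:null_schur_compl} to the last vertex $v_k$ in the sequence (noting that applicability of $\varphi$ guarantees $v_k$ is looped in $H *_c v_1 \cdots *_c v_{k-1}$, which is precisely the hypothesis Lemma~\ref{lem:null_schur_compl} requires).

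With this rank identity in hand, the corollary follows by chaining equivalences. By definition, $\varphi$ is full if and only if $H\varphi$ consists only of isolated vertices. As noted in the remark immediately preceding the corollary, a graph has only isolated vertices if and only if its adjacency matrix has rank zero, so this is equivalent to $r(A(H\varphi)) = 0$. Substituting the rank identity above, this becomes $r(A(H)) - k = 0$, i.e., $k = r(A(H))$, which is what we wanted.

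There is really no substantive obstacle: the only small care is to justify that Lemma~\ref{lem:null_schur_compl} can indeed be applied at each intermediate step, but this is precisely what the applicability of the lc-sequence $\varphi$ means (each $v_i$ is looped in the graph to which $*_c v_i$ is applied). No extra combinatorics is needed, and no structural analysis of the intermediate graphs is required — the proof is essentially a telescoping of the rank-drop lemma.
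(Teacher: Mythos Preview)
Your proposal is correct and is essentially identical to the paper's argument: the paper states the corollary as an immediate consequence of (1) Lemma~\ref{lem:null_schur_compl} (each $*_c$ step drops the rank by one) and (2) the observation that a graph has only isolated vertices if and only if its adjacency matrix has rank zero. You have simply made the telescoping explicit via induction, which is exactly the intended reading.
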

Note that if a full lc-sequence $\varphi$ exists for $H_\pi$ with $\pi$ a signed permutation, then by
Theorem~\ref{thm:rdist_ineq_circle} and Lemma~\ref{lem:id_perm_zero_rank} we have that $d_r(\pi) = r(A(H_\pi))$. So, each full lc-sequence corresponds to an optimal sorting of $\pi$.

\section{Hannenhalli-Pevzner theorem} \label{sec:HPthm}

We now recall the so-called Hannenhalli-Pevzner theorem which gives a precise criterion on arbitrary graphs $H$ for the existence of a full lc-sequence. This theorem has been shown in \cite{DBLP:journals/jacm/HannenhalliP99} in terms of signed permutations $\pi$, but has later been extended to arbitrary graphs $H$ (instead of essentially restricting to circle graphs $H_\pi$). Also, this result was shown independently in \cite{SuccessfulnessChar_Original,BinarySymmetric/BrijderH12} in the context of gene assembly in ciliates (we recall this topic in Section~\ref{sec:ga_ciliates}). We give here another proof of this result, closely following the reasoning of \cite{dam/Bergeron05}.

Let $L$ be the set of looped vertices of a graph $H$. For all $v \in V(H)$, we denote $N^{l}_H(v) = N_H(v) \cap L$ and $N^{ul}_H(v) = N_H(v) \setminus L$. Also, $H$ is said to be \emph{loopless} if $L=\emptyset$.
%none of its vertices has a loop.
\begin{lemma} \label{lem:subset_contract}
Let $H$ be a connected graph and let $v \in V(H)$ be looped. If $H'$ is a loopless connected component of $H|v$, then both (1) $V(H') \cap N^l_H(v) \neq \emptyset$ and (2) $N^{ul}_H(v) \subseteq N^{ul}_H(w)$ and $N^{l}_H(w) \subseteq N^{l}_H(v)$ for all $w \in V(H') \cap N^l_H(v)$.
\end{lemma}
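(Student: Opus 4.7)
The plan is to first unpack the combinatorial effect of the operation $|v$ on a vertex $u \neq v$. By the definition, $H*_c v$ toggles every edge and every loop in the closed neighborhood of $v$; after deleting $v$ this means that for $u, w \in V(H) \setminus \{v\}$:
\begin{itemize}
\item the loop on $u$ in $H|v$ differs from that in $H$ precisely when $u \in N_H(v)$;
\item the edge $\{u,w\}$ in $H|v$ differs from that in $H$ precisely when both $u, w \in N_H(v)$.
\end{itemize}
All subsequent case analysis will be a direct application of these two rules.

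For part (1), I would first dispose of the trivial case $|V(H)|=1$ (where $H|v$ is empty, so no component $H'$ exists). Otherwise, suppose for contradiction that $V(H') \cap N_H(v) = \emptyset$. Then by the two rules above, neither loops in $V(H')$ nor edges with at least one endpoint in $V(H')$ are affected by the $|v$ operation, so the induced subgraph of $H$ on $V(H')$ coincides with $H'$ and is a union of connected components of $H$ not containing $v$. This contradicts connectivity of $H$. Hence some $u \in V(H') \cap N_H(v)$ exists; since $u$ is loopless in $H|v$ (because $H'$ is loopless) and its loop status was toggled, $u$ must be looped in $H$, i.e., $u \in N^l_H(v)$.

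For part (2), fix $w \in V(H') \cap N^l_H(v)$; note that $w$ is loopless in $H|v$ but looped in $H$, consistent with the toggling rule. For the inclusion $N^{ul}_H(v) \subseteq N^{ul}_H(w)$, take $u \in N^{ul}_H(v)$. Since $u \in N_H(v)$ and $u$ is unlooped in $H$, $u$ is looped in $H|v$. Hence $u$ cannot lie in the loopless component $H'$, so $u$ and $w$ lie in different components of $H|v$ and $\{u,w\} \notin E(H|v)$; since both $u,w \in N_H(v)$, the edge $\{u,w\}$ was toggled, so $\{u,w\} \in E(H)$, giving $u \in N^{ul}_H(w)$. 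For $N^l_H(w) \subseteq N^l_H(v)$, take $u \in N^l_H(w)$ and suppose $u \notin N_H(v)$. Then the loop on $u$ is not toggled, so $u$ stays looped in $H|v$; and the edge $\{u,w\}$ is not toggled either (only $w$ lies in $N_H(v)$), so $\{u,w\} \in E(H|v)$. But then $u$ is adjacent to $w$ in $H|v$, forcing $u \in V(H')$, contradicting that $H'$ is loopless. Hence $u \in N_H(v)$, and since $u$ is already looped, $u \in N^l_H(v)$.

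The only real obstacle is bookkeeping: one has to be careful that toggling a loop requires the single endpoint to lie in $N_H(v)$ while toggling an ordinary edge requires both endpoints to do so, and that $w$ itself being in $N^l_H(v)$ provides exactly the ``one endpoint already in $N_H(v)$'' which drives both arguments in part (2). Once this dichotomy is stated cleanly, the proof amounts to two short contradictions using that $H'$ is loopless and connected.
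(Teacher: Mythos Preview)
Your proof is correct and follows essentially the same line as the paper's: both first use connectivity of $H$ together with the fact that $|v$ only alters edges inside $N_H(v)$ to obtain $V(H')\cap N_H(v)\neq\emptyset$, then refine to $N_H^l(v)$ via loop-toggling, and finally derive each inclusion in part~(2) by the same short contradiction using that $H'$ is loopless. Your exposition is slightly more explicit (you state the two toggling rules up front and spell out $u\notin V(H')$ before concluding $\{u,w\}\notin E(H|v)$), but the logical structure is identical to the paper's.
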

\begin{proof}
Let $H'$ be a loopless connected component of $H|v$. Since local complementation changes only edges between vertices of $N_H(v)$, $V(H') \cap N_H(v) \neq \emptyset$. Because local complementation complements the loop status of each vertex of $N_H(v)$ and $H'$ is loopless, $V(H') \cap N^l_H(v) = V(H') \cap N_H(v) \neq \emptyset$.

Let $w \in V(H') \cap N^l_H(v)$.

Firstly, let $x \in N^{ul}_H(v)$. Then $x$ is looped in $H|v$. Since $H'$ is loopless, $\{x,w\} \notin E(H|v)$. Thus $\{x,w\} \in E(H)$ (because $x,w \in N_H(v)$). Consequently, $x \in N^{ul}_H(w)$. Thus $N^{ul}_H(v) \subseteq N^{ul}_H(w)$.

Secondly, let $x \in N^{l}_H(w)$. If $x \notin N^{l}_H(v)$, then $x \in N^{l}_{H|v}(w)$ which contradicts the fact that $w$ belongs to a loopless connected component of $H|v$. Thus $N^{l}_H(w) \subseteq N^{l}_H(v)$.
\end{proof}

\newcommand{\maxs}{\mathrm{MS}}

For a vertex $v$ of $H$, define $s(v) = |N^{ul}_H(v)| - |N^{l}_H(v)|$.
Let $\maxs(H)$ be the set of looped vertices $v$ of $H$ such that $s(w) \leq s(v)$ for all $w \in N^l_H(v)$. Note that for any graph $H$ with looped vertices, $\maxs(H)$ is nonempty since it contains all looped vertices that are (globally) maximal with respect to function $s$.

%Let $MS(H)$ be the set of looped vertices $v$ of $H$ such that NOT( $N^{ul}_H(v) \subsetneq N^{ul}_H(w)$ and $N^{l}_H(w) \subsetneq N^{l}_H(v)$ ) among all vertices in $N^l_H(v) \cup \{v\}$.

\begin{lemma}\label{lem:contract_not_loopless}
Let $H$ be a connected graph and $v \in \maxs(H)$. Then each loopless connected component of $H|v$ consists of only an isolated vertex.
\end{lemma}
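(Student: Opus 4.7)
The plan is to argue by contradiction: suppose some loopless connected component $H'$ of $H|v$ has $|V(H')| \geq 2$, and produce a vertex $w \in N^l_H(v)$ with $s(w) > s(v)$, contradicting $v \in \maxs(H)$.

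By Lemma~\ref{lem:subset_contract}(1) I choose $w \in V(H') \cap N^l_H(v)$; since $H'$ is connected with at least two vertices, $w$ has some neighbor $u \in V(H')$ in $H|v$. Combining the inclusions of Lemma~\ref{lem:subset_contract}(2) with the observations that $v \in N^l_H(w) \setminus N^l_H(v)$ (since $v$ is looped and adjacent to $w$, whereas $v \notin N_H(v)$) and $w \in N^l_H(v) \setminus N^l_H(w)$, one obtains
\[
s(w) - s(v) \;=\; |N^{ul}_H(w) \setminus N^{ul}_H(v)| + |N^l_H(v) \setminus N^l_H(w)| - 1 \;\geq\; 0,
\]
so the baseline is $s(w) \geq s(v)$ and I need one more unit of slack, to be harvested from the edge $\{w, u\}$.

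I split on whether $u$ was originally looped. If $u \in N^l_H(v)$, then $w, u \in N_H(v)$ forces the edge $\{w, u\} \in E(H|v)$ to be a product of the complementation at $v$, so $\{w, u\} \notin E(H)$ and hence $u \in N^l_H(v) \setminus N^l_H(w)$ is a second element distinct from $w$, raising $|N^l_H(v) \setminus N^l_H(w)|$ to at least $2$. Otherwise $u \notin N_H(v) \cup \{v\}$ and is unlooped in $H$ (the only other way $u$ can be unlooped in $H|v$), so the edge $\{w, u\}$ is not affected by $*v$, giving $\{w, u\} \in E(H)$ and placing $u$ in $N^{ul}_H(w) \setminus N^{ul}_H(v)$. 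Either way the right-hand side of the displayed identity is at least $1$, contradicting $v \in \maxs(H)$ together with $w \in N^l_H(v)$; hence $|V(H')| = 1$, and that vertex, being a connected component, is isolated in $H|v$.

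The main subtlety is that the raw inclusions of Lemma~\ref{lem:subset_contract}(2) only give $s(w) \geq s(v)$, which is compatible with $v \in \maxs(H)$. The strict inequality emerges only after carefully accounting for the self-exclusion element $v \in N^l_H(w) \setminus N^l_H(v)$ in the displayed identity, and then exploiting the neighbor $u \in V(H')$ to contribute one additional element to either $N^l_H(v) \setminus N^l_H(w)$ or $N^{ul}_H(w) \setminus N^{ul}_H(v)$.
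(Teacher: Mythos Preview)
Your argument is correct and reaches the same conclusion as the paper, but by a different route. The paper does not argue by contradiction: it picks any $w\in V(H')\cap N^l_H(v)$, observes that $v\in\maxs(H)$ together with the inclusions of Lemma~\ref{lem:subset_contract}(2) force the equality case $N^{ul}_H(v)=N^{ul}_H(w)$ and $N^l_H(v)\setminus\{w\}=N^l_H(w)\setminus\{v\}$, and then checks directly that a vertex $w$ with this ``twin'' relation to $v$ becomes isolated in $H|v$; hence $H'=\{w\}$. Your approach instead assumes a neighbour $u$ of $w$ in $H'$ exists and extracts a strict inequality $s(w)>s(v)$ from it via a two-case analysis on whether $u\in N^l_H(v)$ or $u\notin N_H(v)\cup\{v\}$. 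Both arguments hinge on the same lemma; yours makes the bookkeeping of $s$ very explicit and locates precisely where the extra unit comes from, while the paper's version is shorter and avoids the case split entirely.

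One presentational point worth tightening: you invoke the inclusion $N^l_H(w)\subseteq N^l_H(v)$ from Lemma~\ref{lem:subset_contract}(2) and in the same breath assert $v\in N^l_H(w)\setminus N^l_H(v)$, which are literally incompatible. What the proof of that lemma actually establishes is $N^l_H(w)\setminus\{v\}\subseteq N^l_H(v)$ (the case $x=v$ being vacuous since $v$ is removed in $H|v$), and your displayed identity
\[
s(w)-s(v)=|N^{ul}_H(w)\setminus N^{ul}_H(v)|+|N^l_H(v)\setminus N^l_H(w)|-1
\]
is exactly right once one reads the inclusion that way (it yields $N^l_H(w)\setminus N^l_H(v)=\{v\}$, hence the~$-1$). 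This is a wording issue, not a gap in the mathematics.
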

\begin{proof}
Let $H'$ be a loopless connected component of $H|v$. Since $v \in \maxs(H)$, we have by Lemma~\ref{lem:subset_contract}, $N^{ul}_H(v) = N^{ul}_H(w)$ and $N^{l}_H(w) = N^{l}_H(v)$ for some $w \in V(H')$. Since $v$ and $w$ are moreover looped and adjacent, we have that $w$, and therefore $H'$, is an isolated vertex of $H|v$.
\end{proof}

By iteration of Lemma~\ref{lem:contract_not_loopless}, we obtain the following.
\begin{theorem}\label{thm:exists_full_lc_seq}
Let $H$ be a graph. Then there is a full lc-sequence for $H$ if and only if each loopless connected component of $H$ consists of only an isolated vertex.
\end{theorem}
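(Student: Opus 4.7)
The plan is to prove both directions separately. For the ($\Rightarrow$) direction, I would observe that the operation $*_c v$ only alters edges and loops among vertices in $N_H(v) \cup \{v\}$, all of which lie in the connected component of $v$; moreover, $*_c v$ is applicable only when $v$ is looped. Consequently, if $C$ is a loopless connected component of $H$, then (i) no $*_c$ operation performed on a vertex outside $C$ can touch a vertex in $C$, and (ii) because $C$ never acquires a loop from operations outside it, no vertex inside $C$ is ever looped either, so no $*_c$ operation can be performed inside $C$. Thus $C$ is frozen throughout any lc-sequence. Since a full lc-sequence by definition terminates in a graph consisting only of isolated vertices, the original $C$ must already consist of only an isolated vertex.

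For the ($\Leftarrow$) direction, I would proceed by induction on $|V(H)|$, with the trivial base case $V(H) = \emptyset$ handled by the empty lc-sequence. In the inductive step, if $H$ has no looped vertex then the hypothesis forces every connected component of $H$ to be an isolated vertex, and again the empty sequence is full. Otherwise, let $C$ be a connected component of $H$ containing at least one looped vertex and pick any $v \in \maxs(C)$. Such a $v$ exists because any looped vertex of $C$ maximizing $s$ over the finite set of looped vertices of $C$ automatically belongs to $\maxs(C)$. I then apply Lemma~\ref{lem:contract_not_loopless} to the connected graph $C$ with vertex $v$: each loopless connected component of $C|v$ is an isolated vertex.

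Since the remaining connected components of $H$ are untouched by $*_c v$ and already satisfy the hypothesis, every loopless connected component of $H|v$ is likewise an isolated vertex, so the inductive hypothesis produces a full lc-sequence $\varphi'$ for $H|v$. Because $v$ is isolated and unlooped in $H *_c v$, it appears in no neighborhood and the sequence $\varphi'$ acts on $H *_c v$ exactly as on $H|v$; hence $*_c v \cdot \varphi'$ is a full lc-sequence for $H$. The step I expect to require the most care is the forward direction: one must rule out that a loopless component could silently acquire a loop or an edge via operations performed elsewhere, and that comes down to the observation that $*_c v$ is genuinely local to the component of $v$ (components may split under the operation, but they never merge).
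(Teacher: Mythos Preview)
Your proof is correct and takes essentially the same approach as the paper: the paper's proof is the single line ``By iteration of Lemma~\ref{lem:contract_not_loopless}, we obtain the following,'' and your ($\Leftarrow$) argument is precisely a careful spelling-out of that iteration via induction on $|V(H)|$, choosing $v\in\maxs(C)$ in a component with a loop and invoking the lemma. Your ($\Rightarrow$) direction (locality of $*_c v$ to its component, so loopless components are frozen) is left implicit in the paper but is exactly the intended easy direction.
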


\begin{corollary}[\cite{DBLP:journals/jacm/HannenhalliP99}]
Let $\pi$ be a signed permutation. If each connected component of $H_\pi$ has at least one looped vertex or consists of only an isolated vertex, then $d_r(\pi) = n+1 - c(\pi)$.
\end{corollary}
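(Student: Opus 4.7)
The plan is to combine the existence criterion for a full lc-sequence (Theorem~\ref{thm:exists_full_lc_seq}) with the translation between $*_c v$ operations and reversals, and then identify the length of such a sequence with $n+1-c(\pi)$ via the rank-nullity relation.

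First, the hypothesis on connected components is exactly the condition of Theorem~\ref{thm:exists_full_lc_seq}, so there exists a full lc-sequence $\varphi = *_c v_1 *_c v_2 \cdots *_c v_k$ for $H_\pi$. By Corollary~\ref{cor:full_seq_rank}, $k = r(A(H_\pi))$. Next I translate $\varphi$ back into the world of signed permutations. Because each $v_i$ must be looped at the time $*_c v_i$ is applied, and looped vertices of $H_\pi$ correspond precisely to oriented vertices of $G_\pi$, the operation $H_\pi *_c v_1 = H_{\pi_1}$ corresponds by the theorem relating $H_{\pi'}$ and $H_\pi *_c v$ (Section~\ref{sec:lc}) to a reversal of $\pi$ at the breakpoints of $v_1$, producing a new signed permutation $\pi_1$. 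Iterating, $\varphi$ corresponds to a sequence of $k$ reversals transforming $\pi$ into a signed permutation $\pi^*$ with $H_{\pi^*}$ consisting only of isolated vertices, i.e.\ $r(A(H_{\pi^*}))=0$. By Lemma~\ref{lem:id_perm_zero_rank}, $\pi^*$ is the identity permutation, so $d_r(\pi) \leq k = r(A(H_\pi))$.

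Combining this with the reverse inequality $d_r(\pi) \geq r(A(H_\pi))$ from Theorem~\ref{thm:rdist_ineq_circle} yields $d_r(\pi) = r(A(H_\pi))$. Finally, since $|V(H_\pi)| = n+1$ and $n(A(H_\pi)) = c(\pi)$ by Corollary~\ref{cor:nullity_cycles}, the rank-nullity identity gives
\[
r(A(H_\pi)) = |V(H_\pi)| - n(A(H_\pi)) = (n+1) - c(\pi),
\]
which is the required equality.

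The only non-routine point is the iterative translation of the formal lc-sequence into an actual sequence of reversals on the permutation. This requires noting that after each step the relevant vertex is still looped (which is guaranteed by applicability of $\varphi$ at every intermediate stage) and that the correspondence ``$*_c v$ at an oriented vertex equals a reversal'' from Section~\ref{sec:lc} can be applied one step at a time along $\pi,\pi_1,\pi_2,\ldots,\pi^*$; no further combinatorial argument is needed beyond what has already been established.
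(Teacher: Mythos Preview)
Your proof is correct and follows essentially the same approach as the paper. The paper states the corollary without proof, relying on the remark just before Theorem~\ref{thm:exists_full_lc_seq} (that a full lc-sequence for $H_\pi$ gives $d_r(\pi)=r(A(H_\pi))$ via Theorem~\ref{thm:rdist_ineq_circle} and Lemma~\ref{lem:id_perm_zero_rank}); you have simply spelled out that remark in detail, including the iterative translation of $*_c$ steps into reversals via the unnumbered theorem of Section~\ref{sec:lc}, and then invoked Corollary~\ref{cor:nullity_cycles} and rank--nullity to rewrite $r(A(H_\pi))$ as $n+1-c(\pi)$.
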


\section{DCJ operations and multiple chromosomes} \label{sec:DCJ_multiple}

\newcommand{\DNAfragL}[2]{%
\draw #1 -- ++(2,0) -- ++(0,1) -- ++(-2,0);
\draw #1 +(1,0.5) node{#2};
}

\newcommand{\DNAfragR}[2]{%
\draw #1 -- ++(2,0) -- ++(-2,0) -- ++(0,1) -- ++(2,0);
\draw #1 +(1,0.5) node{#2};
}

\newcommand{\DNAintact}[5]{%
\DNAfragL{($#1+(0,2)$)}{#2}
\DNAfragR{($#1+(2,2)$)}{#3}
\DNAfragL{#1}{#4}
\DNAfragR{($#1+(2,0)$)}{#5}
}

\newcommand{\DNAbreak}[5]{%
\DNAfragL{($#1+(0,1.5)$)}{#2}
\DNAfragR{($#1+(2.5,1.5)$)}{#3}
\DNAfragL{#1}{#4}
\DNAfragR{($#1+(2.5,0)$)}{#5}
}

\begin{figure}
\begin{center}
\resizebox{\textwidth}{!}{
\begin{tikzpicture}[auto,x=0.8cm,y=0.8cm]
\coordinate (n1) at (0,0);
\DNAintact{(n1)}{$w$}{$x$}{$y$}{$z$}
\draw[thick,->] (4.4,1.5) -- (5.6,1.5);
\coordinate (n1) at (6,0.25);
\DNAbreak{(n1)}{$w$}{$x$}{$y$}{$z$}
\draw[thick,->] (10.9,1.5) -- (12.1,1.5);
\coordinate (n1) at (12.5,0);
\DNAintact{(n1)}{$w$}{$z$}{$y$}{$x$}
\end{tikzpicture}
}
\end{center}
\caption{The DCJ operation.}
\label{fig:dcj}
\end{figure}
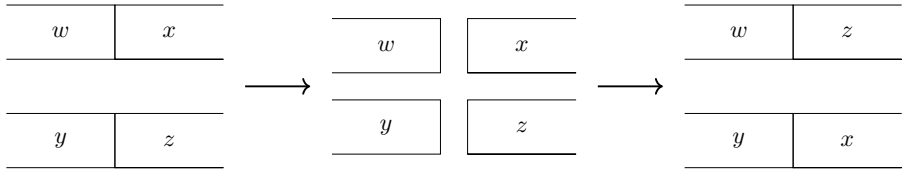

\begin{figure}
\begin{center}
\begin{tikzpicture}[auto,x=0.8cm,y=0.8cm]
\coordinate (n1) at (0,0);
\DNAintact{(n1)}{$w$}{$x$}{$y$}{$z$}
\draw (0,0) -- ++(0,1);
\draw (0,2) -- ++(0,1);
%\draw [gray!50]  (4,2) -- (5,2) -- (5,1) -- (4,1);
\draw plot [smooth, tension=1] coordinates { (4,2) (5,2) (5,1) (4,1) };
%\draw [gray!50]  (4,3) -- (6,3) -- (6,0) -- (4,0);
\draw plot [smooth, tension=1] coordinates { (4,3) (6,2.7) (6,0.3) (4,0) };
\end{tikzpicture}
\end{center}
\caption{Alignment corresponding to a reversal.}
\label{fig:reversal}
\end{figure}
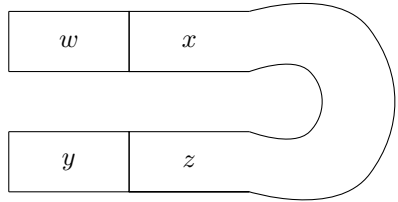

A reversal is a special case of a \emph{double-cut-and-join} (DCJ for short) operation, also called recombination in other contexts. A DCJ operation is depicted in Figure~\ref{fig:dcj}. A DCJ operation consists of three stages: first two distinct breakpoints align, then both breakpoints are cut, and finally the ends are glued back together as depicted in Figure~\ref{fig:dcj}. Since we consider endpoints to be breakpoints as well, any of $w$, $x$, $y$ and $z$ may be nonexistent. From the alignment of Figure~\ref{fig:reversal} one observes that a reversal is a special case of a DCJ operation. DCJ operations are allowed to be intermolecular as well, and so sorting by DCJ operations may involve multiple chromosomes (for example a whole genome) and each chromosome may be linear or circular. The \emph{DCJ distance} of two genomes $g_A$ and $g_B$, denoted by $d_{\mathrm{DCJ}}(g_A,g_B)$, is the minimal number of DCJ operations needed to transform one genome into the other. A toy example of two genomes $g_A$ and $g_B$ of species $A$ and $B$, respectively, consisting of both linear and circular chromosomes is given in Figure~\ref{fig:genomes}.

\newcommand{\signpT}[3]{
\begin{tikzpicture}[auto,x=0.8cm,y=0.8cm]
\draw (0,1) -- (3,1) ;
\draw (0,0) -- (3,0);
\foreach \x in {0,...,3}
{
  \draw (\x,0) -- (\x,1);
}
\draw (0,0) +(0.5,0.5) node{${#1}$};
\draw (1,0) +(0.5,0.5) node{${#2}$};
\draw (2,0) +(0.5,0.5) node{${#3}$};
\end{tikzpicture}
}

\def\radius{0.2cm}
\def\innerradius{0.1cm}
\def\labelrad{0.05cm}
\def\spokes{3}
\newcommand{\circT}{
\scalebox{0.7}{
\begin{tikzpicture}[scale=6]
  \draw (0,0) circle (\radius);
  \draw (0,0) circle (\innerradius);
  \rotnod{0.5*-360/\spokes}{$a$}
  \rotnod{1.5*-360/\spokes}{$-e$}
  \rotnod{2.5*-360/\spokes}{$f$}
  \foreach \x in {30,150,...,360}  \draw (\x:\innerradius) -- (\x:\radius);
\end{tikzpicture}
}}

\def\spokesB{2}
\newcommand{\circD}{
\scalebox{0.7}{
\begin{tikzpicture}[scale=6]
  \draw (0,0) circle (\radius);
  \draw (0,0) circle (\innerradius);
  \rotnod{0.5*-360/\spokesB}{$d$}
  \rotnod{1.5*-360/\spokesB}{$e$}
  \foreach \x in {90,270}  \draw (\x:\innerradius) -- (\x:\radius);
\end{tikzpicture}
}}

\newcommand{\signpO}{
\begin{tikzpicture}[auto,x=0.8cm,y=0.8cm]
\draw (0,1) -- (1,1) ;
\draw (0,0) -- (1,0);
\foreach \x in {0,...,1}
{
  \draw (\x,0) -- (\x,1);
}
\draw (0,0) +(0.5,0.5) node{$f$};
\end{tikzpicture}
}

\begin{figure}
\begin{center}
\resizebox{\textwidth}{!}{
\begin{tikzpicture}
\node (nA1) at (0,0) {\signpT{b}{-d}{c}};
\node (nA2) at (3,0) {\circT};
\draw[dotted] (5,-1.7) -- (5,2.3) ;
\node (nB1) at (8,1.8) {\signpT{a}{b}{c}};
\node (nB2) at (7,0) {\circD};
\node (nB3) at (9,0) {\signpO};
\node (A) at (1.3,-1.5) {genome of species $A$};
\node (B) at (7.8,-1.5) {genome of species $B$};
\end{tikzpicture}
}
\end{center}
\caption{The genomes of species $A$ and $B$.}
\label{fig:genomes}
\end{figure}
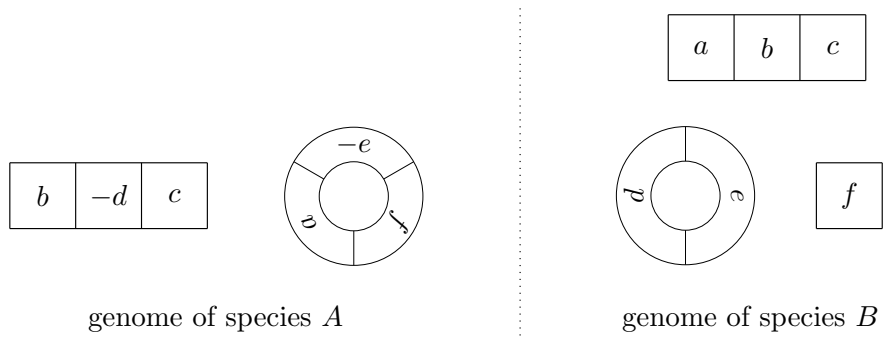

If $g_A$ and $g_B$ contain only circular chromosomes, then the method of Section~\ref{sec:4reg_graph} to construct a 4-regular multigraph applies essentially unchanged --- the only difference is that the circularization step is not done (and so no anchor $\$$ is introduced) because the chromosomes are circular already. Thus, we directly apply the expand step to all chromosomes of $g_A$ and then construct the 4-regular multigraph $G$ as before.
%Note that $G$ may not be connected in this case.
It is now a special case of \cite[Theorem~1]{wabi/BergeronMS06/DCJformula} that $d_{\mathrm{DCJ}}(g_A,g_B) = n-c$, where $n$ is the number of vertices of $G$ (i.e., $n$ is the number of segments in $g_A$ and $g_B$) and $c$ is the number of circuits containing intermediate segments of the circuit partition $P_B$ belonging to $g_B$. The result in \cite{wabi/BergeronMS06/DCJformula} is not stated in terminology of circuit partitions of a 4-regular multigraph, but instead in terms of a graph called an ``adjacency graph''\footnote{The notion of adjacency graph is not to be confused with the different notion of adjacency matrix as recalled earlier in this paper.}. More precisely, in \cite{wabi/BergeronMS06/DCJformula} $c$ is defined as the number of cycles in the adjacency graph, and it can be readily verified that cycles in the adjacency graph correspond one-to-one to circuits of $P_B$ containing intermediate segments.
%, from which we obtain  $d_{\mathrm{DCJ}}(g_A,g_B) = n-c$ by \cite[Theorem~1]{wabi/BergeronMS06/DCJformula}.

While the construction of a 4-regular multigraph as outlined in Section~\ref{sec:4reg_graph} works when both genomes have only circular chromosomes, there are issues for genomes containing linear chromosomes such as in Figure~\ref{fig:genomes}. Indeed, the circularization of the linear chromosomes of $g_B$ in Figure~\ref{fig:genomes} leads to two anchors $\$_1$ and $\$_2$ which somehow need to be reconciled with the single anchor $\$$ in the linear chromosome of $g_A$. Also, the issue discussed in Remarks~\ref{rem:rev_dist_one_diff} and \ref{rem:rev_dist_one_diff_fourreg} (which concerns the issue of whether to use $\$$ or $-\$$ as the anchor) is exasperated when there are several linear chromosomes. We leave it as an open problem to resolve this issue of the absence of a canonical 4-regular multigraph for two genomes. We mention that \cite[Theorem~1]{wabi/BergeronMS06/DCJformula} \emph{is} able to calculate the DCJ distance in this general setting (i.e., with linear chromosomes). The formula takes the form $d_{\mathrm{DCJ}}(g_A,g_B) = n-(c+i/2)$, where $i$ is the number of connected components of the adjacency graph that are odd-length paths. Very roughly, one way to explain this formula in terms of 4-regular multigraphs is that each odd-length path corresponds to a side of a linear chromosome and an anchor (which increases $n$ by one) can be introduced in such a way that both sides of a linear chromosome end up in different circuits (which increases $c$ by two). So, the net effect of two sides of a linear chromosome is one, hence the contribution of $i/2$.

Recall that the construction of a circle graph from Section~\ref{sec:circle_graphs} requires two supplementary circuit partitions $P_A$ and $P_B$ of a 4-regular multigraph $G$ where $P_A$ is an Euler system. While in the theory of sorting by reversals $P_A$ is always an Euler system (in fact, $P_A$ contains an Eulerian circuit of $G_\pi$), the circuit partition $P_A$ belonging to a genome $g_A$ is \emph{not} necessarily a Euler system. In Section~\ref{sec:delta_m} we recall the notion of a delta-matroid which can function as a substitute for circle graphs in scenarios where circle graphs do not exist. In fact, as we will see, delta-matroids are useful even in scenarios where circle graphs do exist, such as in the theory of sorting by reversals.

\section{Delta-matroids} \label{sec:delta_m}
It was casually remarked in the discussion of \cite{spire/HartmanV06} that the Hannenhalli-Pevzner theorem might be generalizable to the more general setting of delta-matroids, which are combinatorial structures defined by Bouchet \cite{mp/Bouchet87}. Indeed, we recall now how delta-matroids can be constructed from circuit partitions in 4-regular multigraphs and from circle graphs. In this way, various results concerning the theory of sorting by reversals can be viewed in this more general setting.

A \emph{set system} $D = (V,S)$ is an ordered pair, where $V$ is a finite set, called the \emph{ground set}, and $S$ a set of subsets of $V$. For set systems $D_1 = (V_1,S_1)$ and $D_2 = (V_2,S_2)$ with disjoint ground sets, we define the \emph{direct sum} of $D_1$ and $D_2$, denoted by $D_1 \oplus D_2$, as the set system $(V_1 \cup V_2, \{ X_1 \cup X_2 \mid X_1 \in S_1, X_2 \in S_2 \})$. In this case we say that $D_1$ and $D_2$ are \emph{summands} of $D$. A set system $D$ is called \emph{connected} if it is not the direct sum of two set systems with nonempty ground sets. Also, $D = (V,S)$ is called \emph{even} if the cardinalities of all sets in $S$ are of equal parity. Let us denote symmetric difference by $\sdif$. A \emph{delta-matroid} $D = (V,S)$ is a set system where $S$ is nonempty and, moreover, for all $X,Y \in S$ and $x \in X \sdif Y$, there is an $y \in X \sdif Y$ (possibly equal to $x$) with $X \sdif \{x,y\} \in S$ \cite{mp/Bouchet87}.

Let $G$ be a $4$-regular multigraph and let $P_1$ and $P_2$ be two supplementary circuit partitions of $G$. Denote by $\mathcal{D}_G(P_1,P_2)$ the set system $(V(G),S)$, where for $X \subseteq V(G)$ we have $X \in S$ if and only if the circuit partition $P$ obtained from $P_1$ by changing, for each $v \in X$, the route of $P_1$ at $v$ to coincide with the route of $P_2$ at $v$, is an Euler system.

The following result is stated in \cite[Theorem~5.2]{mp/Bouchet87} (see also \cite[Theorem~5.2]{bouchet1987}).
%in terms of symmetric matroids (2-matroids, respectively), which are essentially equivalent to delta-matroids.
\begin{theorem} [\cite{mp/Bouchet87}]
Let $G$ be a $4$-regular multigraph and let $P_1$ and $P_2$ be two supplementary circuit partitions of $G$. Then $\mathcal{D}_G(P_1,P_2)$ is a delta-matroid.
\end{theorem}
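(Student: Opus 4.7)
My plan is to reduce this to the classical fact that, for a symmetric matrix $M$ over $GF(2)$, the set of subsets $X$ for which the principal submatrix $M[X]$ is nonsingular (with the empty matrix regarded as nonsingular) forms the feasible sets of a delta-matroid, via the circle graph construction of Section~\ref{sec:circle_graphs}. First I would observe that the delta-matroid exchange axiom is phrased entirely in symmetric differences and is therefore invariant under twists $S \mapsto S \sdif X_0$. Consequently, if $S$ is nonempty, picking any $X_0 \in S$ and replacing the reference pair $(P_1,P_2)$ by $(P_{X_0}, P_{V(G) \setminus X_0})$ simply twists $S$ by $X_0$, so we may assume without loss of generality that $P_1$ itself is an Euler system (hence $\emptyset \in S$). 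Nonemptiness of $S$ in general is handled by a greedy argument using the cardinality trichotomy recalled just before Remark~\ref{rem:rev_dist_one_diff_fourreg}: starting from $P_1$, flipping at any vertex where two distinct circuits of the current partition meet strictly decreases the number of circuits, and iterating reaches an Euler system whose defining subset lies in $S$.

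With $P_1$ assumed to be an Euler system, I would build the circle graph $H$ of $G$ with respect to $P_1, P_2$ as in Section~\ref{sec:circle_graphs}, with adjacency matrix $A = A(H)$ over $GF(2)$. The crucial identity to establish, extending Theorem~\ref{thm:lt_null_circle} from $X = V(G)$ to arbitrary subsets, is
\begin{equation*}
|P_X| - |P_1| \;=\; n(A[X]) \qquad \text{for all } X \subseteq V(G),
\end{equation*}
where $P_X$ is the partition obtained from $P_1$ by flipping routes at each $v \in X$ and $A[X]$ is the principal submatrix of $A$ indexed by $X$. Granted this, $X \in S$ iff $A[X]$ is nonsingular, and the exchange axiom for $\mathcal{D}_G(P_1, P_2)$ reduces to the classical statement that the set of $X$ with $A[X]$ nonsingular satisfies the delta-matroid exchange axiom. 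That statement is proved by a Schur-complement / principal pivot argument on $A$: given nonsingular $A[X]$ and $A[Y]$ and $x \in X \sdif Y$, one performs a one-entry principal pivot at $x$ when the diagonal entry (in a suitably pivoted $A$) is nonzero (yielding the witness $y = x$), and otherwise a two-entry pivot on a pair $\{x, y\}$ for some $y \in X \sdif Y$ with a nonzero off-diagonal interaction.

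The hard part will be the extension of Theorem~\ref{thm:lt_null_circle} from the full ground set to arbitrary $X$. I would attack this by induction on $|X|$, using that a single flip at a vertex $v \in X$ changes $|P|$ by $0$ or $+1$ and corresponds on the matrix side to a principal pivot of $A$ at $v$, which is available precisely when $v$ is looped in $H$ (i.e., oriented for $P_1$ with respect to $P_2$). In the oriented case, the pivot relates $n(A[X])$ to $n(A'[X \sdif \{v\}])$ for the pivoted adjacency matrix $A'$, closing the induction; in the non-oriented case a single flip at $v$ necessarily increases $|P_X|$, and the matrix side must match this via a $2 \times 2$ pivot together with a loopless-submatrix base-case calculation. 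The careful bookkeeping between the orientedness of vertices in $H$ and the nullity of principal submatrices of $A(H)$ — together with verifying that pivoting the matrix really does correspond to the appropriate re-rooting of the circle graph — is the main technical obstacle, since everything else in the plan is either an invocation of the twist invariance or of Bouchet's classical symmetric-matrix delta-matroid theorem.
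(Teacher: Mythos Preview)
The paper does not actually prove this theorem; it is quoted from \cite{mp/Bouchet87} without argument. So there is no ``paper's proof'' to compare against, and I will simply assess your plan.

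Your high-level strategy is sound. The twist-invariance of the symmetric exchange axiom is correct, and the reduction to the case where $P_1$ is an Euler system via $(P_1,P_2)\mapsto(P_{X_0},P_{V\setminus X_0})$ is exactly $\mathcal{D}_G(P_1,P_2)*X_0$. The greedy nonemptiness argument is also correct: at any vertex where two distinct circuits of the current partition meet, \emph{both} nontrivial reroutings merge them, so in particular the flip towards $P_2$ decreases $|P|$.

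The weak point is your ``hard part''. Your inductive sketch has a real obstruction: once you pivot and replace $P_1$ by $P_1'=P_{\{v\}}$, the pair $(P_1',P_2)$ is no longer supplementary at $v$, so you are not back in the same setup and cannot simply recurse. Patching this by replacing $P_2$ at $v$ with the third route gives a supplementary pair $(P_1',P_2')$, but then you must identify its circle graph and check that its adjacency matrix really is the principal pivot $A(H)*v$ (including the loop pattern), which you have not done. The ``non-oriented'' branch is even murkier.

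In fact the identity $|P_X|-|P_1|=n(A(H)[X])$ follows immediately from Theorem~\ref{thm:lt_null_circle} with no induction at all. For each $v\notin X$, detach $v$ along $P_1$'s route and suppress the resulting degree-$2$ vertices; this yields a $4$-regular multigraph $G_X$ on vertex set $X$ with the same number of connected components, whose circle graph with respect to the induced $(P_1,P_2)$ routes is exactly $H[X]$, and whose ``full flip'' partition is exactly $P_X$. Applying Theorem~\ref{thm:lt_null_circle} to $G_X$ gives the identity directly. With this in hand, your reduction to Bouchet's symmetric-matrix delta-matroid theorem goes through.

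Alternatively, note that you only need the weaker statement that $X\in S$ iff $A(H)[X]$ is nonsingular, which is precisely Theorem~\ref{thm:dm_circle_CP}; combined with the (also cited) fact that $\mathcal{D}_H$ is a delta-matroid and your twist reduction, the result is immediate.
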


\begin{example} \label{ex:dm}
Let $G_\pi$ be from Figure~\ref{fig:4reg}, $P_A = \{C\}$ be from Figure~\ref{fig:4reg_splitA} and $P_B$ be from Figure~\ref{fig:4reg_splitB}. Then $\mathcal{D}_{G_\pi}(P_A,P_B) = (V(G),S)$, where
\begin{eqnarray*}
S &=& \{ \emptyset, \{v_1\}, \{v_2\}, \{v_4\}, \{v_6\}, \{v_1,v_3\}, \{v_1,v_5\}, \{v_1,v_6\}, \{v_1,v_7\}, \{v_2,v_4\}, \{v_2,v_5\},\\
 && \quad \{v_2,v_6\}, \{v_3,v_5\}, \{v_4,v_5\}, \{v_4,v_6\}, \{v_5,v_6\}, \{v_5,v_7\}, \{v_1,v_2,v_3\}, \ldots \}.
\end{eqnarray*}
\end{example}

%For a $V \times V$-matrix $A$ and $X \subseteq V$, we denote by $A[X]$ the submatrix of $A$ obtained by removing the rows and columns not in $X$.
For a graph $H$ and $X \subseteq V(G)$, we denote by $H[X]$ the subgraph of $H$ induced by $X$ (i.e., all vertices outside $X$ are removed including their incident edges). Also, denote by $\mathcal{D}_H$ the set system $(V(H),S)$, where for $X \subseteq V(H)$, $X \in S$ if and only if the matrix $A(H[X])$ is invertible. As usual, the empty matrix is invertible by convention.

\begin{theorem} [Theorem~4.1 of \cite{bouchet1987}]
Let $H$ be a graph. Then $\mathcal{D}_H$ is a delta-matroid.
\end{theorem}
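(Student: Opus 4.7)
The plan is to verify the two clauses of the delta-matroid definition for $\mathcal{D}_H = (V(H), S)$. Nonemptiness is immediate: by the stated convention the empty matrix is invertible, so $\emptyset \in S$. The real work is the exchange axiom.

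For the exchange axiom I would lean on the principal pivot transform (PPT) already invoked in the discussion following Lemma~\ref{lem:null_schur_compl}. Write $A := A(H)$ and, for $W \subseteq V(H)$, abbreviate $A[W] := A(H[W])$. If $X \in S$, then $A[X]$ is invertible, so the PPT $A' := A \ast X$ is defined, and over $GF(2)$ this operation sends symmetric matrices to symmetric matrices. The key property of the PPT that I need is the exchange identity: for every $W \subseteq V(H)$, the principal submatrix $A'[W]$ is invertible if and only if $A[X \sdif W]$ is invertible.

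Now fix $X, Y \in S$, set $Z := X \sdif Y$, and pick $x \in Z$. Taking $W = Z$ in the exchange identity and using $Y = X \sdif Z \in S$, we obtain that $A'[Z]$ is invertible. I must produce $y \in Z$ (possibly equal to $x$) such that $A[X \sdif \{x,y\}] = A'[\{x,y\}]$ is invertible, with the convention $\{x,x\} = \{x\}$. I split on the diagonal entry $A'_{xx}$. If $A'_{xx} = 1$, then $y = x$ works, since $A'[\{x\}]$ is the $1 \times 1$ identity. If instead $A'_{xx} = 0$, then the row of $A'[Z]$ indexed by $x$ is nonzero (by invertibility of $A'[Z]$) but its diagonal entry is zero, so some $y \in Z \setminus \{x\}$ satisfies $A'_{xy} = 1$; the $2 \times 2$ symmetric matrix $A'[\{x,y\}]$ then has determinant $A'_{xy}^{2} = 1$ over $GF(2)$, so it is invertible.

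The main obstacle I anticipate is the clean invocation of the PPT exchange identity. Rather than re-proving it I would cite the PPT literature already referenced in the paper (e.g., \cite{Tsatsomeros2000151,BH/PivotLoopCompl/09}), noting explicitly that over $GF(2)$ the PPT of a symmetric matrix remains symmetric, so the entire argument stays inside the world of symmetric $GF(2)$-matrices in which $\mathcal{D}_H$ lives. A minor sanity check to include is that the reduction above really uses only $x \in Z$ (not $x \in X$ or $x \in Y$ separately); this is automatic because $Z = X \sdif Y$ and the PPT identity is agnostic to which side of the symmetric difference $x$ lies on.
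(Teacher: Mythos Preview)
The paper does not supply its own proof of this theorem; it is quoted as Theorem~4.1 of \cite{bouchet1987} and used as a black box. So there is no in-paper argument to compare against.

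Your proposed proof is correct. The two ingredients you invoke are exactly the right ones: (i) the principal pivot transform $A' = A * X$ is well defined because $X \in S$ forces $A[X]$ to be invertible, and over $GF(2)$ it returns a symmetric matrix, so you remain in the realm of adjacency matrices; (ii) the Tucker-type exchange identity $\det(A'[W]) = \det(A[X \sdif W])/\det(A[X])$ (hence $A'[W]$ invertible iff $A[X \sdif W]$ invertible) reduces the symmetric-exchange axiom to finding a nonsingular $1\times 1$ or $2\times 2$ principal submatrix of $A'[Z]$ through the row indexed by $x$. Your case split on $A'_{xx}$ handles this cleanly; in the second case the determinant computation $A'_{xx}A'_{yy} - A'_{xy}A'_{yx} = -(A'_{xy})^2 = 1$ over $GF(2)$ is fine. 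This is in fact the standard route to Bouchet's result (and is how it is usually presented in the PPT literature you cite), so nothing is missing.
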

The delta-matroids of the form $\mathcal{D}_H$ for some graph $H$ are called \emph{binary normal} delta-matroids \cite{bouchet1987}.

We now recall the close connection between delta-matroids of circle graphs and of circuit partitions of 4-regular multigraphs.
\begin{theorem} [Theorem~5.3 of \cite{bouchet1987}] \label{thm:dm_circle_CP}
Let $H$ be the circle graph of a 4-regular multigraph $G$ with respect to the supplementary circuit partitions $P_1$ and $P_2$ with $P_1$ an Euler system. Then $\mathcal{D}_H = \mathcal{D}_G(P_1,P_2)$.
\end{theorem}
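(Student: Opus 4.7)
The plan is to show that a subset $X \subseteq V(G) = V(H)$ is feasible in $\mathcal{D}_G(P_1,P_2)$ if and only if $A(H[X])$ is invertible, for every $X$. Rather than applying Theorem~\ref{thm:lt_null_circle} to $G$ itself (which only yields the case $X = V(G)$), my strategy is to build, from each $X$, a restricted 4-regular multigraph $G_X$ with vertex set $X$ whose associated circle graph is exactly $H[X]$, and then apply Theorem~\ref{thm:lt_null_circle} to $G_X$.

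Concretely, define $G_X$ by smoothing each vertex $v \notin X$: remove $v$ and reconnect its four half-edges according to the common pairing used by $P_1$ and $P_X$ at $v$, where $P_X$ denotes the circuit partition obtained from $P_1$ by switching to $P_2$'s route at every vertex of $X$. Since $P_X$ agrees with $P_1$ off $X$, both partitions traverse the smoothed vertices the same way, so each descends to a circuit partition of $G_X$ with the same set of circuits as before. Consequently $P_1$ is still an Euler system of $G_X$, the number $l_X$ of connected components of $G_X$ equals $|P_1| = l$, and $|P_X|$ is unchanged in passing from $G$ to $G_X$. At every remaining vertex $v \in X = V(G_X)$, the routes of $P_1$ and $P_X$ still differ (by supplementarity of $P_1$ and $P_2$ in $G$), so $P_1$ and $P_X$ are supplementary circuit partitions of $G_X$.

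The core step is to verify that the circle graph of $G_X$ with respect to $P_1$ and $P_X$ coincides with $H[X]$. Adjacency comes from chord interleavings along the Eulerian circuit of $P_1$, and deleting the chords that correspond to vertices outside $X$ does not change whether two remaining chords interleave. For loops, $v \in X$ is oriented for $P_1$ with respect to $P_X$ in $G_X$ iff switching the route of $P_1$ at $v$ yields an Euler system of $G_X$; by the preservation of circuit counts and component counts established above, this is equivalent to $v$ being oriented for $P_1$ with respect to $P_2$ in $G$, hence to $v$ being looped in $H[X]$. Granting this identification, Theorem~\ref{thm:lt_null_circle} applied to $G_X, P_1, P_X$ yields $n(A(H[X])) = |P_X| - l$, and therefore $A(H[X])$ is invertible iff $|P_X| = l$ iff $P_X$ is an Euler system of $G$ iff $X \in \mathcal{D}_G(P_1,P_2)$, which gives $\mathcal{D}_H = \mathcal{D}_G(P_1,P_2)$.

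The main obstacle I anticipate is the careful bookkeeping surrounding the smoothing step, in particular handling loops at smoothed vertices (which can produce ``vertex-free'' closed components in $G_X$) and parallel edges or new loops that may arise in $G_X$ even when $G$ is simpler. These degenerate cases contribute equally to $|P_X|$ and to $l_X$, so they do not affect the difference $|P_X| - l_X$ that feeds into Theorem~\ref{thm:lt_null_circle}, but making the smoothing construction rigorous and verifying that the chord-interleaving and orientation properties genuinely transfer to $G_X$ will be the most technical part of the argument.
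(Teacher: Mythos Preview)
The paper does not give its own proof of this theorem; it simply quotes it as Theorem~5.3 of \cite{bouchet1987}. So there is nothing in the paper to compare your argument against.

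That said, your strategy is sound and is essentially the natural way to derive the statement from Theorem~\ref{thm:lt_null_circle}. Smoothing the vertices of $V(G)\setminus X$ along the routes of $P_1$ yields a 4-regular multigraph $G_X$ in which $P_1$ remains an Euler system (hence $l_X=|P_1|=l$), $P_1$ and $P_X$ are supplementary, and the associated circle graph is $H[X]$: chord interleaving among the surviving chords is unaffected by deleting the others, and the orientation condition at $v\in X$ transfers because the single-vertex switch $P_1\to P'$ at $v$ agrees with $P_1$ at every vertex outside $X$, so $|P'|$ and the Euler-system property are preserved by the smoothing. Theorem~\ref{thm:lt_null_circle} then gives $n(A(H[X]))=|P_X|-l$, and invertibility of $A(H[X])$ is equivalent to $P_X$ being an Euler system of $G$, i.e., to $X$ lying in $\mathcal{D}_G(P_1,P_2)$.

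Your caveat about vertex-free closed components arising from smoothing is well placed; such components contribute equally to $|P_1|$, $|P_X|$, and $l_X$, so the difference $|P_X|-l_X$ that Theorem~\ref{thm:lt_null_circle} uses is unaffected, exactly as you say. The only thing I would add in a polished write-up is an explicit sentence making the ``$P'$ descends to $G_X$ with unchanged circuit count'' step visible in the loop/orientation check, since that is where a reader might otherwise hesitate.
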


Note that $G_\pi$, $P_A$ and $P_B$ as used in Example~\ref{ex:dm} are also used to construct the circle graph $H_\pi$ of Figure~\ref{fig:circle_graph}. So, $\mathcal{D}_{G_\pi}(P_A,P_B)$ of Example~\ref{ex:dm} is equal to $\mathcal{D}_{H_\pi}$.

Notice, e.g., that the looped vertices of $H$ precisely correspond to the singletons in $\mathcal{D}_G(P_1,P_2)$. Indeed, the adjacency matrix of a subgraph containing a single vertex $v$ is invertible precisely when $v$ is looped.

By Theorem~\ref{thm:dm_circle_CP}, $\mathcal{D}_G(P_1,P_2)$ corresponds to a circle graph if $P_1$ an Euler system. Recall that in the case of the sorting of genomes by DCJ operations (see Section~\ref{sec:DCJ_multiple}), the usual construction of a circle graph does not work since the input does not necessarily correspond to an Euler system. However, we can construct $\mathcal{D}_G(P_1,P_2)$ and so delta-matroids allow one to work with combinatorial structures similar to circle graphs in cases (such as in Section~\ref{sec:DCJ_multiple}) where a corresponding circle graph does not seem to exist. This is one important reason for considering delta-matroids. Another reason is that delta-matroids are often more easy to work with than circle graphs (even in the cases where circle graphs exist) since the operation of local complementation (on looped vertices) is much more simple in terms of delta-matroids. We will recall this now.

For a set system $D = (V,S)$ and $X \subseteq V$, define the \emph{twist} of $D$ on $X$, denote by $D*X$, as the set system $(V,S')$ with $S' = \{ X \sdif Y \mid Y \in S\}$. Notice that $(D*X)*X = D$. Also, it is easy to verify that if $D$ is a delta-matroid, then so is $D*X$.
%One of the advantages of using delta-matroids is that local complementation of graphs (on looped vertices) translates to very simple operation of twist.

\begin{theorem}\label{thm:twist_lc}
Let $H$ be a graph and $v \in V(H)$ be looped. Then $\mathcal{D}_{H*v} = \mathcal{D}_{H}*\{v\}$.
\end{theorem}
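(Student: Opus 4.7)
The plan is to reduce the identity to the invertibility invariance of the principal pivot transform (PPT) of a symmetric matrix over $GF(2)$, using the key fact --- already pointed out in Section~\ref{sec:lc} --- that $A(H*v)$ is precisely the PPT of $A(H)$ on the $\{v\}$-indexed submatrix. This PPT is legitimate because $v$ is looped, so $A(H)[\{v\}] = [1]$ is invertible.

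Unwinding definitions, $X \in \mathcal{D}_{H*v}$ iff $A((H*v)[X])$ is invertible, while $X \in \mathcal{D}_{H}*\{v\}$ iff $X \sdif \{v\} \in \mathcal{D}_{H}$, i.e., iff $A(H[X \sdif \{v\}])$ is invertible. Since taking induced subgraphs commutes with taking principal submatrices of the adjacency matrix, it suffices to prove for every $X \subseteq V(H)$ that
\[
A(H*v)[X] \text{ is invertible} \iff A(H)[X \sdif \{v\}] \text{ is invertible}.
\]
This is exactly the fundamental PPT identity applied with $Y = \{v\}$: if $A'$ denotes the PPT of a symmetric matrix $A$ on $Y$ with $A[Y]$ invertible, then $\det(A'[X]) = \det(A[X \sdif Y])/\det(A[Y])$, so $A'[X]$ is invertible iff $A[X \sdif Y]$ is. This handles the cases $v \in X$ and $v \notin X$ uniformly, as $X \sdif \{v\}$ is $X \setminus \{v\}$ in the first case and $X \cup \{v\}$ in the second.

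The only non-routine point is the PPT determinantal identity itself, but this is standard Schur-complement block algebra: after permuting $Y$ to the top-left, one has $\det(A) = \det(A[Y])\det(A/A[Y])$, and the PPT replaces the $Y$-block by its inverse while leaving the Schur complement in place, from which the formula follows (and over $GF(2)$ all signs are moot). The main obstacle, such as it is, is thus simply the identification of $A(H*v)$ with the PPT of $A(H)$ on $\{v\}$, which has already been set up earlier in the paper; once invoked, the theorem drops out with no further graph-theoretic case analysis.
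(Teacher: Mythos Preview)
Your argument is correct. The paper itself states Theorem~\ref{thm:twist_lc} without proof, treating it as a known result from the delta-matroid literature; your derivation via the principal pivot transform invertibility identity (Tucker's exchange formula) is the standard route and dovetails with the paper's remark in Section~\ref{sec:lc} that $A(H*v)$ is the PPT of $A(H)$ on the $\{v\}$-block, so there is nothing to contrast.
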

By Theorem~\ref{thm:twist_lc}, $\mathcal{D}_{H *_c v}$ is obtained from $\mathcal{D}_{H}*\{v\}$ by removing all sets containing $v$.

We can translate the notion of a full lc-sequence for graphs to the realm of delta-matroids. Let $D = (V,S)$ be a set system. A sequence $\sigma = (v_1,v_2,\ldots,v_k)$ of mutually distinct elements of $V$ is a \emph{lc-sequence} for $D$ if for all $i \in \{0,\ldots,k\}$, $\{v_1,\ldots,v_i\} \in S$. Note that, in particular, $\emptyset, \{v_1,\ldots,v_k\} \in S$. An lc-sequence $\sigma = (v_1,v_2,\ldots,v_k)$ for $D$ is said to be \emph{full} if $\{v_1,\ldots,v_k\} \in \max(S)$, where $\max(S)$ is the set of maximal elements of $S$ with respect to inclusion.
%If $D$ is a delta-matroid, then the sets in $\max(S)$ have a common cardinality \cite[Property~4.1]{Bouchet_1991_67}.

\begin{lemma} \label{lem:lcseq_dm}
Let $H$ be a graph. Then $\sigma = (v_1,v_2,\ldots,v_k)$ is an lc-sequence for $\mathcal{D}_{H}$ if and only if $\varphi = *_c v_1 *_c v_2\cdots *_c v_k$ is an lc-sequence for $H$. Moreover, in this case, $\sigma$ is full if and only if $\varphi$ is full.
\end{lemma}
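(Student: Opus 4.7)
The plan is to prove the equivalence of lc-sequences by induction on $k$, using Theorem~\ref{thm:twist_lc} together with the remark that $\mathcal{D}_{H *_c v_1}$ is obtained from $\mathcal{D}_H * \{v_1\}$ by deleting the sets that contain $v_1$. The base case $k = 0$ is immediate: $\emptyset \in S(\mathcal{D}_H)$ since the empty matrix is invertible by convention, and the empty composition of $*_c$ is trivially applicable to $H$.

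For the inductive step, the key unpacking is that a subset $X \subseteq V(H) \setminus \{v_1\}$ belongs to $S(\mathcal{D}_{H *_c v_1})$ precisely when $X \cup \{v_1\} \in S(\mathcal{D}_H)$, obtained by combining $\mathcal{D}_H * \{v_1\} = \{Y \sdif \{v_1\} \mid Y \in S(\mathcal{D}_H)\}$ with the observation that $Y \sdif \{v_1\}$ avoids $v_1$ iff $v_1 \in Y$. Since the $v_i$ are pairwise distinct, the chain of conditions $\{v_1, \ldots, v_i\} \in S(\mathcal{D}_H)$ for $i = 1, \ldots, k$ defining the lc-sequence $\sigma$ is equivalent to $\{v_1\} \in S(\mathcal{D}_H)$ (i.e.\ $v_1$ is looped in $H$, making $*_c v_1$ applicable) together with $\{v_2, \ldots, v_i\} \in S(\mathcal{D}_{H *_c v_1})$ for $i = 1, \ldots, k$. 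The induction hypothesis, applied to the graph $H *_c v_1$ and the tail $(v_2, \ldots, v_k)$, then yields the equivalence.

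For the ``moreover'' assertion, observe first that any $X \in S(\mathcal{D}_H)$ satisfies $|X| = r(A(H[X])) \le r(A(H))$, so all sets in $S(\mathcal{D}_H)$ have size at most $r(A(H))$, and by Corollary~\ref{cor:full_seq_rank} the lc-sequence $\varphi$ for $H$ is full iff $k = r(A(H))$. If $k = r(A(H))$, then $\{v_1, \ldots, v_k\}$ cannot be strictly contained in any larger element of $S(\mathcal{D}_H)$, so $\sigma$ is full. Conversely, if $\sigma$ is full while $k < r(A(H))$, then by Lemma~\ref{lem:null_schur_compl} we have $r(A(H \varphi)) = r(A(H)) - k > 0$, so $H \varphi$ carries either a looped vertex $v$ or an edge $\{u,w\}$ between distinct unlooped vertices; using the interpretation of $A(H \varphi)$ as the Schur complement of the block indexed by $\{v_1,\ldots,v_k\}$ in $A(H)$ (as noted after Lemma~\ref{lem:null_schur_compl}), one concludes that $\{v_1,\ldots,v_k,v\}$ or $\{v_1,\ldots,v_k,u,w\}$ lies in $S(\mathcal{D}_H)$, contradicting maximality.

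The main obstacle I anticipate is the last half of the ``moreover'' direction: the looped case yields a one-element extension immediately from the fact that singletons in $\mathcal{D}_{H\varphi}$ correspond to loops, but when $H\varphi$ has only non-loop edges surviving one must invoke Schur complementation (via the invertible $2 \times 2$ principal block of $A(H\varphi)$ indexed by the endpoints of such an edge) to recover a two-element extension in $S(\mathcal{D}_H)$. Everything else is essentially bookkeeping built on top of Theorem~\ref{thm:twist_lc} and Corollary~\ref{cor:full_seq_rank}.
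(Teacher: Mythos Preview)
Your argument for the first equivalence is correct and essentially matches the paper's: both proceed by induction on $k$ using Theorem~\ref{thm:twist_lc} and the remark immediately after it. The only cosmetic difference is that you peel off $v_1$ from the front and apply the induction hypothesis to $H*_c v_1$, whereas the paper peels off $v_k$ from the back.

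For the ``moreover'' clause the two proofs genuinely diverge. The paper simply invokes the \emph{strong principal minor theorem} (citing Kodiyalam--Lam--Swan and \cite[Lemma~12]{MaxPivotsGraphs/Brijder09}) to assert that every set in $\max(S)$ has cardinality exactly $r(A(H))$; fullness on both sides then collapses to the single condition $k=r(A(H))$ via Corollary~\ref{cor:full_seq_rank}. Your route instead re-derives the needed instance of that theorem by hand: you bound $|X|\le r(A(H))$ for all $X\in S$, and when $k<r(A(H))$ you locate an invertible $1\times1$ or $2\times2$ principal block in $A(H\varphi)$ and lift it to a strictly larger element of $S$ via the Schur determinant identity $\det M=\det(M_{11})\det(M/M_{11})$. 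This is correct and more self-contained than the paper's citation, but note that it leans on two standard Schur-complement facts---the quotient property under iteration (so that $A(H\varphi)$ restricted to $V(H)\setminus\{v_1,\ldots,v_k\}$ really is the Schur complement on the full block $\{v_1,\ldots,v_k\}$) and the determinant identity---which the passage after Lemma~\ref{lem:null_schur_compl} states only for a single vertex. If you keep your approach, spell those out; otherwise the paper's one-line citation is the quicker path.
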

\begin{proof}
We prove the first statement by induction on $k$. If $k=0$, then $\sigma$ and $\varphi$ are empty sequences and so the result holds trivially. Assume that $k > 0$ and that the result holds for $k' = k-1$.

First, let $\sigma$ be an lc-sequence for $\mathcal{D}_{H}$. In particular, $\sigma' = (v_1,v_2,\ldots,v_{k-1})$ is an lc-sequence for $\mathcal{D}_{H}$. By the induction hypothesis, $\varphi' = *_c v_1 *_c v_2 \cdots *_c v_{k-1}$ is an lc-sequence for $H$. Assume to the contrary that $\varphi$ is not an lc-sequence. Then $v_{k}$ is not a looped vertex of $H \varphi'$. Thus $\{v_{k}\}$ is not a set of $\mathcal{D}_{H \varphi'}$. By the sentence below Theorem~\ref{thm:twist_lc}, $\{v_{k}\}$ is not a set of $\mathcal{D}_{H}*\{v_1,v_2,\ldots,v_{k-1}\}$. In other words, $\{v_1,v_2,\ldots,v_{k}\}$ is not a set of $\mathcal{D}_{H}$ --- a contradiction.

Second, let $\varphi$ be an lc-sequence for $H$. In particular, $\varphi' = *_c v_1 *_c v_2\cdots *_c v_{k-1}$ is an lc-sequence for $H$. By the induction hypothesis, $\sigma' = (v_1,v_2,\ldots,v_{k-1})$ is an lc-sequence for $\mathcal{D}_{H}$. It suffices now to show that $\{v_1,v_2,\ldots,v_{k}\}$ is a set of $\mathcal{D}_{H}$. By the sentence below Theorem~\ref{thm:twist_lc}, $\{v_{k}\}$ is a set of $\mathcal{D}_{H}*\{v_1,v_2,\ldots,v_{k-1}\}$. Thus $\{v_1,v_2,\ldots,v_{k}\}$ is a set of $\mathcal{D}_{H}$.

We now prove the second statement. By the strong principal minor theorem, see \cite{Kodiyalam_Lam_Swan_2008} and also \cite[Lemma 12]{MaxPivotsGraphs/Brijder09}, each set in $\max(S)$ is of cardinality $r(A(H))$. Thus lc-sequence $\sigma$ is full if and only if $k = r(A(H))$ if and only if $\varphi$ is full (by Corollary~\ref{cor:full_seq_rank}).
\end{proof}

We are now ready to rephrase Theorem~\ref{thm:exists_full_lc_seq} in delta-matroid terminology as follows.
\begin{theorem}\label{thm:exists_full_lc_seq_dm}
Let $D$ be a binary normal delta-matroid. Then there is a full lc-sequence for $D$ if and only if each even connected summand of $D$ with nonempty ground set is of the form $(\{v\},\emptyset)$ for some element $v$ in the ground set of $D$.
\end{theorem}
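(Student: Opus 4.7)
The plan is to reduce to the graph version already proved, Theorem~\ref{thm:exists_full_lc_seq}, using the identification $D = \mathcal{D}_H$ (available since $D$ is binary normal) together with Lemma~\ref{lem:lcseq_dm}. Applying those, a full lc-sequence for $D$ exists iff one exists for $H$, which in turn holds iff each loopless connected component of $H$ consists of a single isolated vertex. It then remains to translate this graph condition into the delta-matroid condition stated in the theorem.

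To set up the translation, I would verify three properties of the map $H \mapsto \mathcal{D}_H$. First, disjoint unions of graphs (with no edges between them) correspond to direct sums of delta-matroids: $A(H[X_1 \cup X_2])$ is block-diagonal with blocks $A(H_i[X_i])$, so $\mathcal{D}_{H_1 \sqcup H_2} = \mathcal{D}_{H_1} \oplus \mathcal{D}_{H_2}$; iterating gives $D = \bigoplus_i \mathcal{D}_{H_i}$ over the connected components of $H$. Second, $\mathcal{D}_H$ is even iff $H$ is loopless: a loop at $v$ puts $\emptyset$ and $\{v\}$, of opposite parity, into $\mathcal{D}_H$; conversely, if $H$ is loopless then $A(H[X])$ is symmetric with zero diagonal, hence skew-symmetric over $GF(2)$ and of even rank, forcing any feasible $X$ to have even cardinality. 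Third, an isolated unlooped vertex $v$ of $H$ contributes precisely the summand $(\{v\}, \{\emptyset\})$ to $\mathcal{D}_H$; conversely, if such a summand appears, then any neighbor $w$ of $v$ would make $\{v,w\}$ feasible (direct computation of the $2 \times 2$ submatrix) while the summand structure forbids feasible sets containing $v$, so $v$ must be isolated and unlooped.

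The key remaining step is the following lemma: if $H'$ is a connected graph, then $\mathcal{D}_{H'}$ is connected as a delta-matroid; this ensures the decomposition $D = \bigoplus_i \mathcal{D}_{H_i}$ is the decomposition of $D$ into connected summands. I would argue by contradiction. Suppose $\mathcal{D}_{H'} = E_1 \oplus E_2$ with both ground sets nonempty; connectedness of $H'$ provides an edge $\{x, y\}$ with $x \in V(E_1)$ and $y \in V(E_2)$. Computing $\det A(H'[\{x,y\}])$ over $GF(2)$ shows $\{x,y\}$ is feasible in $\mathcal{D}_{H'}$ iff at most one of $x, y$ is looped. But the direct-sum structure forces $\{x,y\}$ to be feasible iff $\{x\}$ lies in the feasible family of $E_1$ and $\{y\}$ in that of $E_2$, i.e., iff both $x$ and $y$ are looped. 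These conditions conflict in every loop configuration, yielding the contradiction.

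Combining the three properties with the connectedness lemma, the connected summand decomposition of $D$ matches the connected component decomposition of $H$: each summand $\mathcal{D}_{H_i}$ is itself connected, it is even iff $H_i$ is loopless, and it equals $(\{v\}, \{\emptyset\})$ iff $H_i$ is an isolated unlooped vertex. Hence ``every loopless component of $H$ is an isolated vertex'' matches ``every even connected summand of $D$ with nonempty ground set is of the form $(\{v\}, \{\emptyset\})$'', completing the reduction. The main obstacle is the connectedness lemma, but even it reduces to a short case analysis on the $2 \times 2$ adjacency submatrix along an edge crossing the summands.
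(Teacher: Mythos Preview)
Your proof is correct and follows the same architecture as the paper's: write $D=\mathcal{D}_H$, use Lemma~\ref{lem:lcseq_dm} to transfer full lc-sequences between $D$ and $H$, apply Theorem~\ref{thm:exists_full_lc_seq}, and then translate ``every loopless connected component of $H$ is an isolated vertex'' into the delta-matroid condition via the correspondence between connected components of $H$ and connected summands of $\mathcal{D}_H$, together with the equivalence ``$\mathcal{D}_{H'}$ even $\Leftrightarrow$ $H'$ loopless''.

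The only substantive difference is in how that component/summand correspondence is obtained. The paper simply \emph{cites} Bouchet's result that $X\subseteq V(H)$ is the ground set of a summand of $\mathcal{D}_H$ if and only if $H[X]$ is a union of connected components of $H$. You instead prove the needed implication directly: given a nontrivial direct-sum split of $\mathcal{D}_{H'}$ with $H'$ connected, pick an edge $\{x,y\}$ crossing the split and compare the feasibility of $\{x,y\}$ as dictated by the $2\times 2$ adjacency submatrix (feasible iff not both looped) with what the direct-sum structure forces (feasible iff both looped), obtaining a contradiction in every case. This makes your argument self-contained, at the price of a short extra computation; otherwise the two proofs coincide. (Both handle ``even $\Leftrightarrow$ loopless'' the same way, via the even-rank property of zero-diagonal symmetric matrices over $GF(2)$.)
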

\begin{proof}
Since $D$ is a binary normal delta-matroid, we have that $D = \mathcal{D}_H$ for some graph $H$. By Lemma~\ref{lem:lcseq_dm}, there is a full lc-sequence for $D$ if and only if there is a full lc-sequence for $H$. Also observe that, if a graph $H'$ consists of only an isolated vertex $v$, then $\mathcal{D}_{H'} = (\{v\},\emptyset)$. By Theorem~\ref{thm:exists_full_lc_seq}, it suffices to recall that (1) a graph $H'$ is loopless if and only if $\mathcal{D}_{H'}$ is even (the if-direction is immediate and the only-if direction follows from the well-known fact that invertible zero-diagonal skew-symmetric matrices have even dimensions, see, e.g., \cite{Bouchet_1991_67}) and (2) for all $X \subseteq V(G)$, the subgraph of $H$ induced by $X$ is a (possibly empty) union of connected components of $H$ if and only if there is a summand of $\mathcal{D}_H$ with ground set $X$ \cite[Proposition~5]{DBLP:conf/seccomb/Bouchet91}.
%
%if $H'$ is a connected component $H'$ of $H$, then $\mathcal{D}_{H'}$ is a connected summand of $D$ \corr{CITE}, and (3) if $D'$ is a connected summand of $D$, then $D' = \mathcal{D}_{H'}$ for some connected component $H'$ of $H$ \corr{CITE}.
\end{proof}
Theorem~\ref{thm:exists_full_lc_seq_dm} does not hold for arbitrary delta-matroids. Indeed, the delta-matroid $D = (V,S)$ with $|V| = 3$ and $S = \{ X \subseteq V \mid |X| \neq 1 \}$ is connected but not even and so the right-hand side of the equivalence of Theorem~\ref{thm:exists_full_lc_seq_dm} trivially holds. However, $D$ does not have a full lc-sequence since $D$ does not contain any singletons. It would be interesting to see to which class of delta-matroids Theorem~\ref{thm:exists_full_lc_seq_dm} can be generalized.

We briefly mention that delta-matroids have been generalized to multimatroids in \cite{DBLP:journals/siamdm/Bouchet97}. In this general setting, delta-matroids translate to a class of multimatroids called 2-matroids. Therefore, this section could also have been phrased in the setting of 2-matroids. Another class of multimatroids, called tight 3-matroids can also be associated to 4-regular multigraphs. See \cite{DBLP:journals/tcs/Brijder15} for the case where the 4-regular multigraph is derived from the context of gene assembly in ciliates, which is a theory closely related to that of sorting by reversals, see Section~\ref{sec:ga_ciliates}. Although it is out of the scope of this paper, it would be interesting to study tight 3-matroids associated to 4-regular multigraphs from the context of sorting by reversals.

\section{A closely related theory: gene assembly in ciliates} \label{sec:ga_ciliates}
Gene assembly is a process taking place during sexual reproduction of unicellular organisms called ciliates. During this process, a nucleus, called the \emph{micronucleus} (or MIC for short), is transformed into another, very different, nucleus, called \emph{macronucleus} (or MAC for short). Each gene in the MAC is one block consisting of a number of consecutive\footnote{Actually, the MDSs overlap slightly in the MAC (the overlapping regions are called pointers), but this is not relevant for this paper.} segments called MDSs. These MDSs also appear in the corresponding MIC gene, but they can appear there in (seemingly) arbitrary order and orientation with respect to the MAC gene and they are moreover separated by noncoding segments called IESs. A (toy) example of a gene consisting of seven MDSs $M_1,\ldots,M_7$ is given in Figures~\ref{fig:MIC} and \ref{fig:MAC}, where the IESs are denoted by $I_1,\ldots,I_8$ and if an MDS $M_i$ in the MIC gene is in inverted orientation (i.e., rotated by 180 degrees) with respect to the MAC gene, then this is denoted by $\overline{M_i}$ (this is the standard notation in this theory, and would of course be written with a minus sign, i.e., $-M_i$, in the theory of sorting by reversals).

The postulated way in which a MIC gene is transformed into its MAC gene is through DCJ operations (called DNA recombination in this context), where MDSs that are not adjacent in the MIC gene but are adjacent in the MAC gene are aligned, cut and glued back such that the two MDSs become adjacent like they appear in the MAC. For example, segments $w$ and $z$ of Figure~\ref{fig:dcj} may be MDSs $M_i$ and $M_{i+1}$, respectively, and $x$ and $y$ IESs. In the \emph{intramolecular} model, the application of DCJ operations is restricted in such a way that they cannot result in MDSs appearing in different molecules \cite{GeneAssemblyBook} (however, it is allowed to apply two DCJ operations simultaneously, when each of them separately would result in a split of the molecule).

Starting from the MIC gene, Figure~\ref{fig:MIC}, we can construct a 4-regular multigraph in a similar way as for the theory of sorting by reversals. However, unlike before, the left-hand side of the first MDS $M_1$ and the right-hand side of the last MDS $M_7$ are not considered breakpoints. Hence, in Figure~\ref{fig:MIC}, we merge adjacent segments $I_1$ and $M_1$ ($M_7$ and $I_4$, respectively) in the MIC gene into a single segment called $I_1 M_1$ ($M_7 I_4$, respectively). Moreover, recall that before we introduced an anchor segment $\$$ during the circularization, because the endpoints of a chromosome are breakpoints too. In the case of MIC genes, there are no endpoints that are breakpoints and so we omit the anchor segment $\$$. Instead, during the circularization the rightmost segment $I_8$ is merged with the leftmost segment $I_1 M_1$ in the MIC gene into a single segment called $I_8;I_1M_1$, see Figure~\ref{fig:circ_ga}. Finally, because the ``signed permutation'' of Figure~\ref{fig:MIC} already contains intermediate segments (the IESs), we also do not have an expand step like in Figure~\ref{fig:exp_signed_perm}. In this way, the intermediate segments are physical, while they are ``virtual'' in the theory of sorting by reversals.

The construction of a 4-regular multigraph from Figure~\ref{fig:circ_ga} is now identical as in the theory of sorting by reversals. Due the similarity of the ``signed permutations'' of Figures~\ref{fig:sign_perm} and \ref{fig:MIC}, we see that the circle graph corresponding to Figure~\ref{fig:circ_ga} is obtained from the circle graph of Figure~\ref{fig:circle_graph} by removing vertices $v_0$ and $v_7$. See \cite{BH/algebra-Tampa} for more details on how the 4-regular multigraphs, circle graphs and delta-matroids can be used to study gene assembly in ciliates.

From the discussion above it is not surprising that the theory of sorting by reversals and the theory of gene assembly in ciliates partly overlap. For example, in both theories local complementation plays an important role --- indeed, as we mentioned in Section~\ref{sec:HPthm} the Hannenhalli-Pevzner theorem was discovered independently in both theories. We remark that links between gene assembly in ciliates and sorting by reversals have also been appreciated in \cite{Extended_paper,involve/ciliateSortingReversal,ciliateSortingReversal}.

\begin{figure}
\begin{center}
\resizebox{\textwidth}{!}{
\begin{tikzpicture}[auto,x=0.8cm,y=0.8cm]
\draw (0,1) -- (15,1);
\draw (0,0) -- (15,0);
\foreach \x in {1,...,14}
{
  \draw (\x,0) -- (\x,1);
}
\draw (0,0) +(0.5,0.5) node{$I_1$};
\draw (1,0) +(0.5,0.5) node{$M_{1}$};
\draw (2,0) +(0.5,0.5) node{$I_2$};
\draw (3,0) +(0.5,0.5) node{$\overline{M_{6}}$};
\draw (4,0) +(0.5,0.5) node{$I_3$};
\draw (5,0) +(0.5,0.5) node{$M_7$};
\draw (6,0) +(0.5,0.5) node{$I_4$};
\draw (7,0) +(0.5,0.5) node{$M_4$};
\draw (8,0) +(0.5,0.5) node{$I_5$};
\draw (9,0) +(0.5,0.5) node{$\overline{M_{2}}$};
\draw (10,0) +(0.5,0.5) node{$I_6$};
\draw (11,0) +(0.5,0.5) node{$\overline{M_{5}}$};
\draw (12,0) +(0.5,0.5) node{$I_7$};
\draw (13,0) +(0.5,0.5) node{$M_{3}$};
\draw (14,0) +(0.5,0.5) node{$I_8$};
\end{tikzpicture}
}
\end{center}
\caption{MIC form of a gene}
\label{fig:MIC}
\end{figure}

\begin{figure}
\begin{center}
\begin{tikzpicture}[auto,x=0.8cm,y=0.8cm]
\draw (0,1) -- (9,1);
\draw (0,0) -- (9,0);
\foreach \x in {1,...,8}
{
  \draw (\x,0) -- (\x,1);
}
\foreach \x in {1,...,7}
{
  \draw (\x,0) +(0.5,0.5) node{$M_{\x}$};
}
\end{tikzpicture}
\end{center}
\caption{MAC form of a gene}
\label{fig:MAC}
\end{figure}

%\begin{figure}
%\begin{center}
%\resizebox{\textwidth}{!}{
%\begin{tikzpicture}[auto,x=0.8cm,y=0.8cm]
%\coordinate (n1) at (0,0);
%\DNAintact{(n1)}{$M_i$}{$I_j$}{$I_k$}{$M_{i+1}$}
%\draw[thick,->] (4.4,1.5) -- (5.6,1.5);
%\coordinate (n1) at (6,0.25);
%\DNAbreak{(n1)}{$M_i$}{$I_j$}{$I_k$}{$M_{i+1}$}
%\draw[thick,->] (10.9,1.5) -- (12.1,1.5);
%\coordinate (n1) at (12.5,0);
%\DNAintact{(n1)}{$M_i$}{$M_{i+1}$}{$I_k$}{$I_j$}
%\end{tikzpicture}
%}
%\end{center}
%\caption{DNA recombination}
%\label{fig:recombination}
%\end{figure}

\def\radius{0.8cm}
\def\innerradius{0.65cm}
\def\labelrad{0.08cm}
\def\spokes{12}
\begin{figure}
\begin{center}
\scalebox{0.7}{
\begin{tikzpicture}[scale=6]
  \draw (0,0) circle (\radius);
  \draw (0,0) circle (\innerradius);
%  \foreach \x in {2,3,...,7}
%  {
%    %\rotnod{360/\spokes-\x*360*2/\spokes}{$I_\x$}
%    \rotnod{(\x*2-2)*-360/\spokes}{$I_{\x}$}
%  };
  \rotnod{0*-360/\spokes}{$I_8 ; I_1 M_1$}
  \rotnod{1*-360/\spokes}{$I_2$}
  \rotnod{2*-360/\spokes}{$\overline{M_6}$}
  \rotnod{3*-360/\spokes}{$I_3$}
  \rotnod{4*-360/\spokes}{$M_7 I_4$}
  \rotnod{5*-360/\spokes}{$M_4$}
  \rotnod{6*-360/\spokes}{$I_5$}
  \rotnod{7*-360/\spokes}{$\overline{M_2}$}
  \rotnod{8*-360/\spokes}{$I_6$}
  \rotnod{9*-360/\spokes}{$\overline{M_5}$}
  \rotnod{10*-360/\spokes}{$I_7$}
  \rotnod{11*-360/\spokes}{$M_3$}
  \foreach \x in {15,45,...,360}  \draw (\x:\innerradius) -- (\x:\radius);
\end{tikzpicture}
}
\end{center}
\caption{Circularization of the MIC gene of Figure~\ref{fig:MIC}.}
\label{fig:circ_ga}
\end{figure}
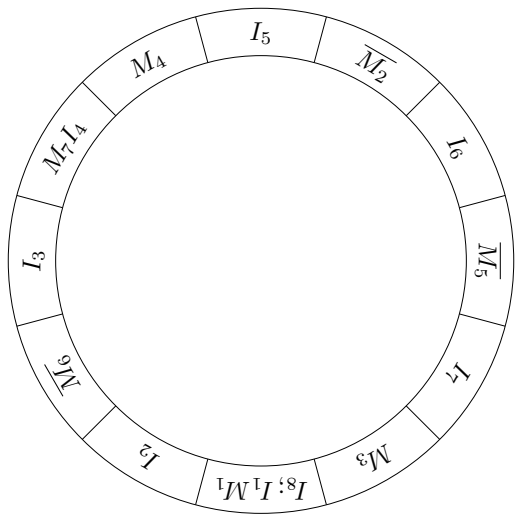

%% ===============================================
%% ===============================================
\section{Discussion} \label{sec:disc}
We have shown that the theory of 4-regular multigraphs, including their accompanying combinatorial structures such as circle graphs and delta-matroids, can be applied to the topic of sorting by reversals. The fact that a notion such as local complementation has been rediscovered in the context of sorting by reversals signifies the importance of the theory of 4-regular multigraphs.

This paper may serve as an introduction to the theory of 4-regular multigraphs for the audience familiar with sorting by reversals. This paper has covered only very little of the extensive body of knowledge that the theory of 4-regular multigraphs provides and that can be applied to the topic of sorting by reversals. Indeed, we also mention for example the notion of \emph{touch graph} \cite[Section~6]{Bouchet/87/ejc/isotropicsys} (see also, e.g., \cite{EUJC/Traldi/transmat}) of a circuit partition $P$ that is likely to be useful for sorting by reversals. Indeed, while omitting details, the touch graph of a circuit partition $P_B$ belonging to species $B$ in the context of sorting by reversals (and also in the context of gene assembly in ciliates) is of a very special form, that of a star graph, which likely have interesting consequences. Indeed, while only implicitly stated in \cite{OverlapRedGrFull/Brijder07}, this star graph property is key in the main result of \cite{OverlapRedGrFull/Brijder07} from the context of gene assembly in ciliates.

Since the theory of gene assembly in ciliates can also be fit into the theory of 4-regular multigraphs, we have also extended the links between the theories of gene assembly in ciliates and  sorting by reversals as observed in \cite{Extended_paper,involve/ciliateSortingReversal,ciliateSortingReversal}.

Finally, we have formulated the open problem of using the theory of 4-regular multigraphs in the case of DCJ operations in the presence of linear chromosomes (see Section~\ref{sec:DCJ_multiple}) and also the open problem of generalizing the Hannenhalli-Pevzner theorem from graphs to a suitable subclass of delta-matroids (see Section~\ref{sec:delta_m}).

%% ===============================================
%% ===============================================
\subsection*{Acknowledgements}
We thank anonymous referees for their helpful comments on an earlier version of this paper.

%% ===============================================
%% ===============================================
%\section*{\refname}
\bibliography{../mmatroids}

\end{document}